\documentclass[pdflatex,sn-mathphys-num]{sn-jnl}

\usepackage{graphicx}
\usepackage{multirow}
\usepackage{amsmath,amssymb,amsfonts}
\usepackage{amsthm}
\usepackage[title]{appendix}
\usepackage{xcolor}
\usepackage{textcomp}
\usepackage{manyfoot}
\usepackage{booktabs}
\usepackage{tikz}
\usepackage{algorithm}
\usepackage{algpseudocode}
\usetikzlibrary{arrows.meta,positioning,shapes.geometric}
\hypersetup{pdftitle={Dimension Rigidity and Projective Geometry of Trace-Product Switchings of the Gold Cube},pdfauthor={Oleksandr Kuznetsov},pdfkeywords={APN functions, Gold function, switching, relative trace, finite fields, projective geometry}}

\theoremstyle{thmstyleone}
\newtheorem{theorem}{Theorem}[section]
\newtheorem{proposition}[theorem]{Proposition}
\newtheorem{lemma}[theorem]{Lemma}
\newtheorem{corollary}[theorem]{Corollary}
\theoremstyle{thmstyletwo}
\newtheorem{remark}[theorem]{Remark}

\theoremstyle{thmstylethree}

\newcommand{\F}{\mathbb F}

\newcommand{\Tr}{\operatorname{Tr}}
\newcommand{\tr}{\operatorname{tr}}
\newcommand{\im}{\operatorname{im}}

\newcommand{\ord}{\operatorname{ord}}

\newcommand{\cB}{\mathcal B}

\newcommand{\trchar}{\chi_E}
\newcommand{\abs}[1]{\lvert#1\rvert}
\newcommand{\set}[1]{\left\{#1\right\}}

\begin{document}

\title[Trace-product switchings of the Gold cube]{Dimension Rigidity and Projective Geometry of Trace-Product Switchings of the Gold Cube}

\author*[1,2]{\fnm{Oleksandr} \sur{Kuznetsov}}\email{oleksandr.kuznetsov@uniecampus.it}

\affil*[1]{\orgdiv{Department of Theoretical and Applied Sciences}, \orgname{eCampus University}, \orgaddress{\street{Via Isimbardi 10}, \postcode{22060}, \city{Novedrate}, \state{CO}, \country{Italy}}}
\affil[2]{\orgdiv{Department of Intelligent Software Systems and Technologies, School of Computer Science and Artificial Intelligence}, \orgname{V.N. Karazin Kharkiv National University}, \orgaddress{\street{4 Svobody Sq.}, \postcode{61022}, \city{Kharkiv}, \country{Ukraine}}}

\abstract{We completely classify a natural scalar trace-product switching of the Gold almost perfect nonlinear function $x\mapsto x^3$ in every even dimension. Nontrivial switchings occur only for $n=4,6,8$: the admissible coefficients are, respectively, the nonzero trace-zero elements, the six elements of multiplicative order nine, and $\F_4^*$. For every even $n\geq10$, no nonzero coefficient is admissible. The infinite range is excluded by additive-character estimates on a Fermat cubic, with exact finite bridges for $n=10,12$. The raw coefficient lists for $n=6,8$ appeared earlier in Arshad's dissertation; our contribution is their intrinsic description, a proof uniform in the dimension, and the resulting dimension-rigidity theorem. We also classify normalized rank-two extensions in dimension eight by $\mathbb P^1(\F_4)$. A binary trace selector accepts two coefficient values at each non-base projective point, and the eight accepted marked switchings form exactly two extended-affine, hence two CCZ, classes. A centre-independent low-rank derivative criterion reduces each rank-$r$ candidate to $2^r-1$ membership tests in precomputed forbidden sets. The global APN classes reached are known; the results describe their local organization around the Gold centre and rule out this switching mechanism in all larger even dimensions.}

\keywords{almost perfect nonlinear function, Gold function, switching, relative trace, finite fields, projective geometry}
\pacs[MSC Classification]{94A60, 11T71, 11T23, 06E30}

\maketitle

\section{Introduction}\label{sec:intro}

Almost perfect nonlinear (APN) functions are the vectorial Boolean functions with optimal differential uniformity in characteristic two.  They play a central role in the theory of cryptographic mappings and are tightly connected with finite geometry, coding theory, and incidence structures; see, for example, \cite{Carlet2021}.  The Gold maps $x\mapsto x^{2^k+1}$ with $\gcd(k,n)=1$ are the basic quadratic APN power functions \cite{Gold1968}.  A persistent theme is to modify a known APN map locally or in a small number of output directions while preserving the APN property.

Switching constructions are not new.  Budaghyan, Carlet and Leander constructed switched cubes such as $x^3+\Tr(x^9)$ \cite{BudaghyanCarletLeander2009}, and Edel and Pott developed a broad rank-one switching framework \cite{EdelPott2009}.  More recently, modifications on affine subspaces of small codimension have been characterized systematically \cite{TaniguchiEtAl2025}.  In particular, the common-trace-factor family $F(x)+t(x)L(x)$ has a necessary and sufficient hyperplane-injectivity criterion.  The low-rank derivative formulation below recovers that criterion and extends its operator form to several arbitrary quadratic Boolean kernels; no priority is claimed for the rank-one or common-trace-factor cases.

The historical boundary is especially important for the present family.  Chapter~5 of Arshad's dissertation \cite{Arshad2018} studies $x^3+\Tr(x)L(x)$ and modifications on four cosets of a codimension-two subspace.  Its Examples~5.14 and~5.24 report, in field coordinates, the complete admissible sets
\[
 \set{0,\beta^7,\beta^{14},\beta^{28},\beta^{35},\beta^{49},\beta^{56}}
 \quad(n=6)
\]
and
\[
 \set{0,1,\beta^{85},\beta^{170}}
 \quad(n=8),
\]
respectively.  Thus the raw coefficient lists are prior results.  What was not provided there is an intrinsic description of those lists, a proof valid across all even dimensions, or the projective classification of synchronized rank-two extensions.  The later journal treatment \cite{TaniguchiEtAl2025} gives general existence criteria but not the dimension ladder or the projective selector proved here.

\subsection*{Our contributions}
\begin{enumerate}
\item We determine the admissible coefficients of the scalar trace-product family in every even dimension: a trace-zero hyperplane minus zero for $n=4$, the six elements of order nine for $n=6$, $\F_4^*$ for $n=8$, and the empty set for every even $n\geq10$.
\item We prove the uniform nonexistence result above dimension eight by additive-character sums on a Fermat cubic, with exact finite bridges only for $n=10,12$.
\item In dimension eight we classify all normalized rank-two combinations of two projective trace forms. The parameter space is $\mathbb P^1(\F_4)$, and a binary trace selector chooses exactly two coefficient pairs at each of the four non-base points.
\item The resulting eight marked switchings form two Frobenius cycles and exactly two EA classes; Yoshiara's theorem \cite{Yoshiara2012} then gives two CCZ classes.
\item As a reusable tool, we give a centre-independent fixed-point criterion for coefficient-rank-$r$ quadratic updates. It yields precomputable forbidden sets and reduces each candidate to $2^r-1$ incidence tests.
\end{enumerate}
The reached global classes are known. The point is instead to describe how they occur around the Gold centre and why the phenomenon is confined to dimensions four, six, and eight. The practical value is methodological: the nonexistence theorem removes an entire switching mechanism from searches in larger even dimensions, while the low-rank criterion replaces repeated APN tests by precomputed incidence queries and can be transported to other quadratic APN centres.

\begin{figure}[ht]
\centering
\begin{tikzpicture}[>=Latex, node distance=8mm and 5mm, every node/.style={font=\small}, box/.style={draw, rounded corners, align=center, minimum width=2.7cm, minimum height=1.25cm}]
\node[box] (n4) {\(n=4\)\\7 coefficients\\trace-zero; EA-trivial};
\node[box, right=of n4] (n6) {\(n=6\)\\6 coefficients\\order 9};
\node[box, below=of n6] (n8) {\(n=8\)\\3 rank-one coefficients\\\(\F_4^*\); projective rank two};
\node[box, left=of n8] (n10) {even \(n\ge10\)\\no nonzero coefficient};
\draw[->] (n4) -- (n6);
\draw[->] (n6) -- (n8);
\draw[->] (n8) -- (n10);
\end{tikzpicture}
\caption{Dimension rigidity of the scalar trace-product switching. The admissible set changes algebraic type in dimensions four, six and eight, and disappears in every larger even dimension.}
\label{fig:dimension-ladder}
\end{figure}

The paper contains computer-assisted finite steps, but their role is explicit.  The dimension-six and dimension-eight necessity arguments are compressed to 12 and 33 Frobenius-orbit witnesses.  The infinite nonexistence theorem is analytic for $n\geq14$, with exact finite bridges for $n=10,12$.  All scripts, exact JSON certificates, and hashes are supplied as Online Resource~1 and preserved in the Zenodo archival release.

\section{Preliminaries}\label{sec:prelim}

Let $V$ be an $n$-dimensional vector space over $\F_2$.  A map $F:V\to V$ is APN if, for every $a\ne0$ and every $b$, the equation
\[
 F(x+a)+F(x)=b
\]
has at most two solutions.  We normalize quadratic maps by $F(0)=0$ and write their polar form as
\[
 B_F(a,x)=F(a+x)+F(a)+F(x).
\]
For fixed $a$, the map $x\mapsto B_F(a,x)$ is linear.  A normalized quadratic map is APN if and only if
\begin{equation}\label{eq:polar-apn}
 \ker B_F(a,\cdot)=\langle a\rangle
 \qquad\text{for every }a\ne0.
\end{equation}
All Boolean quadratic forms below are normalized similarly.

Two maps are extended-affine (EA) equivalent if one is obtained from the other by affine permutations of the input and output together with an affine output term.  CCZ equivalence is affine equivalence of graphs.  EA equivalence implies CCZ equivalence, and for quadratic APN functions the converse holds by \cite{Yoshiara2012}.  We use spectra of ortho-derivatives as strongly discriminating EA invariants, following the computational invariant literature \cite{Kaleyski2021,BeierleLeanderPerrin2022}.

For even $n$, put
\[
 K=\F_{2^n},\qquad E=\F_4\subset K,
\]
and fix $\lambda\in E\setminus\F_2$, so $\lambda^2+\lambda+1=0$.  We use
\[
 T=\Tr_{K/E},\qquad \tr=\Tr_{E/\F_2}.
\]
Trace transitivity gives $\Tr_{K/\F_2}=\tr\circ T$.

\section{Low-rank derivative updates}\label{sec:lowrank}

Let $q=(q_1,\ldots,q_r):V\to\F_2^r$ be a vector of normalized quadratic Boolean forms and let $U:\F_2^r\to V$ be linear.  Define
\[
 G=F+U\circ q.
\]
For $a\in V$, set
\[
 R_a(x)=\bigl(B_{q_1}(a,x),\ldots,B_{q_r}(a,x)\bigr).
\]
Because every polar form is alternating, $R_a(a)=0$, so $R_a$ is well defined on $V/\langle a\rangle$.

\begin{theorem}[Low-rank fixed-point criterion]\label{thm:lowrank}
Let $F:V\to V$ be a normalized quadratic APN function.  Then $G=F+U\circ q$ is APN if and only if there are no $a\ne0$ and $y\ne0$ such that
\begin{equation}\label{eq:fixedpoint}
 U(y)\in\im B_F(a,\cdot)
 \quad\text{and}\quad
 R_a\!\left(\overline B_{F,a}^{-1}(U(y))\right)=y,
\end{equation}
where
\[
 \overline B_{F,a}:V/\langle a\rangle\longrightarrow\im B_F(a,\cdot)
\]
is the isomorphism induced by $B_F(a,\cdot)$.
\end{theorem}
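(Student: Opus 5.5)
The plan is to reduce the APN property of $G$ to the polar criterion \eqref{eq:polar-apn} and then unpack the kernel condition using the APN property of $F$. Since $U$ is linear and each $q_i$ is a normalized quadratic form, $G$ is again a normalized quadratic map. Its polar form is
\[
 B_G(a,x)=B_F(a,x)+U\bigl(R_a(x)\bigr),
\]
because the polar form is additive in the function and $U$ commutes with taking polar forms coordinatewise. By \eqref{eq:polar-apn}, $G$ is APN if and only if, for every $a\neq0$, the only solutions of $B_F(a,x)=U(R_a(x))$ are $x\in\langle a\rangle$. Both $x=0$ and $x=a$ always solve it, since $B_F(a,a)=0$ and $R_a(a)=0$ by the alternating property. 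So the task is to show that a solution $x\notin\langle a\rangle$ exists exactly when some $y\neq0$ satisfies \eqref{eq:fixedpoint}.

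For the forward direction, suppose $a\neq0$ and $x\notin\langle a\rangle$ satisfy $B_F(a,x)=U(R_a(x))$. Put $y=R_a(x)$. Then $U(y)=B_F(a,x)\in\im B_F(a,\cdot)$, and $\overline B_{F,a}^{-1}(U(y))=\bar x$ in $V/\langle a\rangle$. Since $R_a$ is well defined on the quotient, this gives $R_a(\bar x)=y$. It remains to check $y\neq0$. If $y=0$, then $B_F(a,x)=U(0)=0$, so $x\in\ker B_F(a,\cdot)=\langle a\rangle$ because $F$ is APN, which is a contradiction.

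For the converse, suppose $a\neq0$ and $y\neq0$ satisfy \eqref{eq:fixedpoint}. Choose any representative $x$ of the class $\overline B_{F,a}^{-1}(U(y))$. Then $B_F(a,x)=U(y)$ and $R_a(x)=y$, so $B_G(a,x)=U(y)+U(y)=0$. If $x\in\langle a\rangle$, then $R_a(x)=0\neq y$, which is impossible. Hence $\ker B_G(a,\cdot)\supsetneq\langle a\rangle$, and $G$ is not APN.

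The only delicate point is this bookkeeping rather than any real obstacle. We must use that $\overline B_{F,a}$ is an isomorphism, which is where the APN property of $F$ enters through $\ker B_F(a,\cdot)=\langle a\rangle$. We must also use that $R_a$ factors through $V/\langle a\rangle$, so the composite in \eqref{eq:fixedpoint} is independent of the chosen representative. Finally, the condition $y\neq0$ is exactly what excludes the trivial kernel elements $0$ and $a$. No linearity of $U$ beyond $U(0)=0$ and compatibility with polar forms is needed. In particular $U$ need not be injective: different $x$ with the same $y$ are handled uniformly, because the criterion quantifies over $y$ rather than over $x$.
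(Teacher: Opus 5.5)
Your proof is correct and follows the same route as the paper. You compute $B_G(a,x)=B_F(a,x)+U(R_a(x))$, use $\ker B_F(a,\cdot)=\langle a\rangle$ to force $y=R_a(x)\neq0$ for an extra kernel vector, and conversely lift a solution of the fixed-point condition to a kernel vector outside $\langle a\rangle$. Your converse is slightly more explicit than the paper's: you show the representative lies outside $\langle a\rangle$ because $R_a$ vanishes on $\langle a\rangle$ while $y\neq0$, which also covers the case $U(y)=0$ for non-injective $U$.
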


\begin{proof}
The polar form of $G$ is
\[
 B_G(a,x)=B_F(a,x)+U(R_a(x)).
\]
Suppose $x\notin\langle a\rangle$ lies in its kernel and put $y=R_a(x)$.  If $y=0$, then $B_F(a,x)=0$, contradicting the APN property of $F$.  Hence $y\ne0$, $U(y)=B_F(a,x)$ belongs to the derivative image, and the class of $x$ modulo $\langle a\rangle$ is $\overline B_{F,a}^{-1}(U(y))$.  This gives \eqref{eq:fixedpoint}.  Conversely, a solution of \eqref{eq:fixedpoint} supplies a class different from zero in $V/\langle a\rangle$ and therefore an extra kernel vector of $B_G(a,\cdot)$.
\end{proof}

For a fixed centre and fixed kernels define, for each $y\ne0$, the forbidden set
\begin{equation}\label{eq:forbidden}
 \cB_y(F,q)=\set{B_F(a,x):a\ne0,\ R_a(x)=y}.
\end{equation}

\begin{corollary}[Forbidden-set test]\label{cor:forbidden}
The update $F+U\circ q$ is APN if and only if
\[
 U(y)\notin\cB_y(F,q)
 \qquad\text{for every }y\in\F_2^r\setminus\set0.
\]
After the sets $\cB_y$ have been precomputed, one candidate requires only $2^r-1$ membership tests.
\end{corollary}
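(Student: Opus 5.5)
The plan is to derive Corollary~\ref{cor:forbidden} directly from Theorem~\ref{thm:lowrank}. I will show that the existence of a pair $(a,y)$ satisfying \eqref{eq:fixedpoint} is equivalent to $U(y)\in\cB_y(F,q)$ for some $y\ne0$. Once this equivalence is in place, the APN criterion is exactly the contrapositive of the theorem.

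\emph{Forward direction.} Suppose $a\ne0$ and $y\ne0$ satisfy \eqref{eq:fixedpoint}. Since $U(y)\in\im B_F(a,\cdot)$, the class $\bar x=\overline B_{F,a}^{-1}(U(y))$ exists. Choose any representative $x$ of $\bar x$. Because $R_a$ is well defined on $V/\langle a\rangle$, we have $R_a(x)=y$, and by construction $B_F(a,x)=U(y)$. Hence $U(y)\in\cB_y(F,q)$.

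\emph{Converse direction.} Suppose $U(y)=B_F(a,x)$ for some $a\ne0$ and some $x$ with $R_a(x)=y$. Then $U(y)$ lies in $\im B_F(a,\cdot)$. Moreover, the class of $x$ is exactly $\overline B_{F,a}^{-1}(U(y))$, because $\overline B_{F,a}$ is injective. Therefore $R_a$ of this class equals $y$, which is \eqref{eq:fixedpoint}.

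The only point needing care is this well-definedness on the quotient, and it has already been recorded before the theorem: $B_{q_i}(a,a)=0$ and $B_F(a,a)=0$ since polar forms are alternating. Note also that the definition of $\cB_y$ places no restriction on $x$. This is harmless: when $y\ne0$, the condition $R_a(x)=y$ forces $x\notin\langle a\rangle$. So there is no genuine obstacle, only bookkeeping about quotient representatives.

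For the complexity statement, observe that once $F$ and $q$ are fixed, the sets $\cB_y$ depend only on $F$, $q$ and $y$, not on $U$. Testing a candidate $U$ therefore means evaluating $U$ on the $2^r-1$ nonzero vectors $y\in\F_2^r$ and checking $U(y)\notin\cB_y$ for each, which gives $2^r-1$ membership tests.
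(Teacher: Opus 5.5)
Your proposal is correct and follows the same route as the paper. The paper states the corollary without a separate proof, as a direct rewriting of the fixed-point condition of Theorem~\ref{thm:lowrank} through the definition \eqref{eq:forbidden} of $\cB_y$; you spell out that unpacking in both directions, and you correctly obtain the count of $2^r-1$ membership tests from the fact that the precomputed sets do not depend on $U$.
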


\begin{proposition}[Specialization to the hyperplane-injectivity criterion]\label{prop:hequiv}
Let $t:V\to\F_2$ be nonzero and let $L:V\to V$ be binary linear. For $G(x)=F(x)+t(x)L(x)$, Theorem~\ref{thm:lowrank} is equivalent to the hyperplane-injectivity criterion of \cite{TaniguchiEtAl2025}.
\end{proposition}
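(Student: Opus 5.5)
The plan is to compute the polar form of $G$ explicitly, apply the kernel reformulation underlying Theorem~\ref{thm:lowrank}, and then split according to the values of $t$ on $a$ and on the candidate kernel vector $x$. The resulting conditions should be, term by term, the hyperplane-injectivity statement of \cite{TaniguchiEtAl2025}.

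First I put $G$ in the form $F+U\circ q$. Choose a basis $e_1,\ldots,e_n$ of $V$ and write $L(x)=\sum_i \ell_i(x)e_i$. Then $t(x)L(x)=U(q(x))$ with $q_i=t\,\ell_i$ and $U(y)=\sum_i y_ie_i$, so $U$ is the identity of $\F_2^n$ in these coordinates. Each $q_i$ is a normalized quadratic form with polar form $B_{q_i}(a,x)=t(a)\ell_i(x)+t(x)\ell_i(a)$. Hence
\[
 U(R_a(x))=t(a)L(x)+t(x)L(a),\qquad
 B_G(a,x)=B_F(a,x)+t(a)L(x)+t(x)L(a).
\]
By Theorem~\ref{thm:lowrank}, equivalently by the kernel description in its proof, $G$ is APN if and only if no pair $a\neq0$, $x\notin\langle a\rangle$ satisfies
\[
 B_F(a,x)=t(a)L(x)+t(x)L(a).
\]
The fixed-point condition~\eqref{eq:fixedpoint} is just this equation read through $\overline B_{F,a}^{-1}$, so the two formulations are interchangeable.

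Next I split on $t(a)$.
\begin{itemize}
\item If $t(a)=0$, the equation becomes $B_F(a,x)=t(x)L(a)$. When $t(x)=0$ this forces $B_F(a,x)=0$, hence $x\in\langle a\rangle$ by the APN property of $F$, so nothing new arises. The only obstruction is therefore $L(a)\in B_F(a,\cdot)\bigl(V\setminus\ker t\bigr)$. Since $a\in\ker t$, the set $V\setminus\ker t$ is a union of cosets of $\langle a\rangle$, so this condition is well defined on $V/\langle a\rangle$. It says that $L(a)$ must avoid the image of the nontrivial coset of the hyperplane $\ker t$.
\item If $t(a)=1$, the equation says that $x$ is a kernel vector of the linear map $D_a:x\mapsto B_F(a,x)+L(x)+t(x)L(a)$ with $x\notin\langle a\rangle$. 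Since $t(a)=1$, every class of $V/\langle a\rangle$ has a unique representative in $\ker t$, on which $D_a$ restricts to $B_F(a,\cdot)+L$. Moreover $D_a(a)=B_F(a,a)+L(a)+L(a)=0$, so $D_a$ factors through $V/\langle a\rangle$. Thus the condition becomes injectivity of $x\mapsto B_F(a,x)+L(x)$ on the hyperplane $\ker t$.
\end{itemize}

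The two cases together should reproduce the hyperplane-injectivity criterion of \cite{TaniguchiEtAl2025}: injectivity on $\ker t$ when $t(a)=1$, and the avoidance condition on the complementary coset when $t(a)=0$. The converse direction is the same chain of equivalences read backwards, so the two criteria coincide.

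The main obstacle is matching the exact formulation in \cite{TaniguchiEtAl2025}. Their criterion may be stated as a single injectivity condition on a hyperplane for every $a$, for instance injectivity of $x\mapsto B_F(a,x)+t(a)L(x)+t(x)L(a)$ on a complement of $\langle a\rangle$. In that case I would fold the $t(a)=0$ case into the same map, noting that on $V/\langle a\rangle$ its nonzero kernel classes are exactly the forbidden coincidences $B_F(a,x)=L(a)$ with $t(x)=1$. I would then check that the two uniform formulations agree class by class, taking care with the coset bookkeeping modulo $\langle a\rangle$ in both cases.
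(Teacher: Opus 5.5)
\textbf{Gap in the case $t(a)=0$.} Your polar computation $B_G(a,x)=B_F(a,x)+t(a)L(x)+t(x)L(a)$ and your treatment of $t(a)=1$ match the paper. The problem is the case $t(a)=0$. You stop at a separate avoidance condition $L(a)\notin B_F(a,\cdot)(V\setminus\ker t)$ and assert that the cited criterion consists of both conditions. The hyperplane-injectivity criterion only tests directions $a$ with $t(a)=1$, namely injectivity of $x\mapsto B_F(a,x)+L(x)$ on $\ker t$. So to prove the equivalence you must show that your $t(a)=0$ condition is already implied by these trace-one tests.

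Without that step, the implication ``hyperplane criterion $\Rightarrow$ $G$ is APN'' is not established. Reading your chain of equivalences backwards does not supply it, because the chain ends in a two-part condition. Your fallback formulation, injectivity of the full polar map on a complement of $\langle a\rangle$ for every $a$, does not help either. It is just the APN kernel condition restated, not the hyperplane criterion. You flag this matching as ``the main obstacle'' and leave it unresolved.

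\textbf{The missing idea.} The polar form is symmetric, $B_F(a,x)=B_F(x,a)$ (hence $B_G$ is symmetric too), so you may swap the roles of direction and kernel vector. A bad pair with $t(a)=0$ must have $t(x)=1$, as you observed, and satisfies $B_F(a,x)=L(a)$. By symmetry this reads $B_F(x,a)+L(a)=0$ with $0\ne a\in\ker t$. That is exactly a failure of injectivity of $y\mapsto B_F(x,y)+L(y)$ on $\ker t$ at the trace-one direction $x$.

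Hence every obstruction with $t(a)=0$ is already detected by the trace-one test with the roles exchanged. Your avoidance condition is redundant, the two-part condition collapses to the single hyperplane test, and the equivalence follows. This role-swap is precisely the step the paper's proof uses for the $t(a)=0$ case.
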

\begin{proof}
The polar update is $t(a)L(x)+t(x)L(a)$. If $t(a)=1$, every class modulo $\langle a\rangle$ has a unique representative in $\ker t$, and the absence of an extra derivative-kernel vector is exactly the injectivity of $x\mapsto B_F(a,x)+L(x)$ on $\ker t$. If $t(a)=0$, a possible extra kernel vector must satisfy $t(x)=1$; by symmetry $B_F(a,x)=B_F(x,a)$, the same equation is the trace-one test with $x$ as derivative direction. Thus the all-direction fixed-point condition and the hyperplane criterion are equivalent.
\end{proof}
\begin{remark}
The novelty claimed here is the arbitrary-kernel, rank-$r$ operator formulation and its precomputed forbidden sets, not the hyperplane case.
\end{remark}

\begin{algorithm}[ht]
\caption{Testing a low-rank update after precomputation}\label{alg:forbidden}
\begin{algorithmic}[1]
\Require Quadratic APN centre $F$, kernels $q_1,\ldots,q_r$, candidate linear map $U$
\State Precompute $\cB_y(F,q)$ from \eqref{eq:forbidden} for every $0\ne y\in\F_2^r$
\For{$0\ne y\in\F_2^r$}
  \If{$U(y)\in\cB_y(F,q)$} \State \Return not APN \EndIf
\EndFor
\State \Return APN
\end{algorithmic}
\end{algorithm}
After the one-time construction of the forbidden sets, a candidate requires exactly $2^r-1$ membership tests. This is the operational advantage of the criterion in searches around other quadratic APN centres.

\section{The scalar trace-product family}\label{sec:family}

Define the Boolean quadratic form
\begin{equation}\label{eq:Qdef}
 Q_n(x)=\tr(T(x))\,\tr(\lambda T(x))
       =\Tr_{K/\F_2}(x)\,\Tr_{K/\F_2}(\lambda x)
\end{equation}
and, for $\theta\in K$, the quadratic map
\begin{equation}\label{eq:Ftheta}
 F_{n,\theta}(x)=x^3+\theta Q_n(x).
\end{equation}

\begin{lemma}[Polar trace form]\label{lem:polarQ}
The polar form of $Q_n$ is
\begin{equation}\label{eq:polarQ}
 B_Q(a,x)=\tr\bigl(T(a)T(x)^2\bigr).
\end{equation}
In particular, $B_Q(a,x)=1$ if and only if $T(a)$ and $T(x)$ are linearly independent over $\F_2$.
\end{lemma}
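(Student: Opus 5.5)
The plan is to reduce everything to a computation inside $E=\F_4$. Put $u=T(a)$ and $v=T(x)$; both are in $E$, and $T$ is $E$-linear, so $T(a+x)=u+v$. From the first expression in \eqref{eq:Qdef}, $Q_n(x)=g(T(x))$ with
\[
 g(w)=\tr(w)\,\tr(\lambda w),\qquad w\in E.
\]
The polar form of $Q_n$ is therefore $B_Q(a,x)=g(u+v)+g(u)+g(v)$. The map $w\mapsto(\tr(w),\tr(\lambda w))$ is $\F_2$-linear, so $B_Q(a,x)=\tr(u)\tr(\lambda v)+\tr(v)\tr(\lambda u)$. It remains to identify this expression with $\tr(uv^2)$.

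First I will check that $\tr(uv^2)$ is $\F_2$-bilinear in $(u,v)$. This holds because squaring is additive, so $v\mapsto v^2$ is $\F_2$-linear. Both sides are then bilinear forms on the $2$-dimensional $\F_2$-space $E$, and it suffices to compare them on the basis $\{1,\lambda\}$. Here $\tr(1)=0$, $\tr(\lambda)=1$ and $\tr(\lambda^2)=1$.
\begin{itemize}
\item For $(u,v)=(1,1)$ both sides are $0$.
\item For $(u,v)=(1,\lambda)$ the left side is $\tr(1)\tr(\lambda^2)+\tr(\lambda)\tr(\lambda)=1$, and the right side is $\tr(\lambda^2)=1$.
\item For $(u,v)=(\lambda,1)$ the left side is $\tr(\lambda)\tr(\lambda)+\tr(1)\tr(\lambda^2)=1$, and the right side is $\tr(\lambda)=1$.
\item For $(u,v)=(\lambda,\lambda)$ the left side is $0$, since it is alternating. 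The right side is $\tr(\lambda^3)=\tr(1)=0$.
\end{itemize}
This proves \eqref{eq:polarQ}.

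An alternative route is the second expression in \eqref{eq:Qdef}. Expanding $\Tr(a+x)\Tr(\lambda a+\lambda x)$ and using $\Tr=\tr\circ T$ leads to the same bilinear identity, so nothing new is needed.

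For the ``in particular'' statement I will use the following observation. A nonzero alternating $\F_2$-bilinear form on the $2$-dimensional space $E$ is the determinant form. It equals $1$ exactly on pairs that are linearly independent, meaning both entries are nonzero and distinct. The form $\tr(uv^2)$ is alternating because $\tr(u^3)=\tr(1)=0$ for $u\ne0$, and it is nonzero by the table above. This gives the claim.

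There is no genuine obstacle here. The only care needed is the bookkeeping that $T(a+x)=T(a)+T(x)$ and the check that the $\F_2$-linearity of the Frobenius makes $\tr(uv^2)$ bilinear.
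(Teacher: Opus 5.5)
Your proof is correct and follows the paper's route. You set $u=T(a)$, $v=T(x)$, reduce the polar form to the identity $\tr(u)\tr(\lambda v)+\tr(v)\tr(\lambda u)=\tr(uv^2)$ on $E$ (your bilinearity-plus-basis check is an explicit version of the paper's ``direct calculation''), and conclude that a nonzero alternating form on the two-dimensional $\F_2$-space $E$ is the nondegenerate determinant form.
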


\begin{proof}
Expand the product in \eqref{eq:Qdef} at $a+x$ and cancel the pure terms.  For $u,v\in E$, a direct calculation using $\lambda^2+\lambda+1=0$ gives
\[
 \tr(u)\tr(\lambda v)+\tr(v)\tr(\lambda u)=\tr(uv^2).
\]
Taking $u=T(a)$ and $v=T(x)$ yields \eqref{eq:polarQ}.  The resulting alternating form on the two-dimensional $\F_2$-space $E$ is nondegenerate.
\end{proof}

\subsection{Relation with the previously computed examples}

Arshad's examples are written as
\begin{equation}\label{eq:arshadterm}
 x^3+\theta\Tr_{K/\F_2}(x)T(x).
\end{equation}
They represent the same quadratic polar update as \eqref{eq:Ftheta}.  Indeed, for $y\in E$ one has
\begin{equation}\label{eq:traceidentity}
 \tr(y)y=\tr(y)\tr(\lambda y)+\lambda^2\tr(y).
\end{equation}
The last term is $\F_2$-linear in $y$.  Hence \eqref{eq:arshadterm} differs from $F_{n,\theta}$ by a binary linear output map.  This identity allows us to interpret the field-coordinate sets in Examples~5.14 and~5.24 of \cite{Arshad2018} as the intrinsic sets described below.

\subsection{Section products}

For a two-dimensional $\F_2$-subspace $W=\langle a,x\rangle\subset K$, define
\[
 \pi(W)=\prod_{w\in W\setminus\set0}w=ax(a+x).
\]
The Gold polar form satisfies
\[
 B_{x^3}(a,x)=a^2x+ax^2=\pi(W).
\]

\begin{theorem}[Section-product criterion]\label{thm:section}
Let $n$ be even and $\theta\ne0$.  Put
\begin{equation}\label{eq:Sn}
 S_n=\set{c^3+d^3:T(c)=1,\ T(d)=0}.
\end{equation}
Then
\begin{equation}\label{eq:sectioncrit}
 F_{n,\theta}\text{ is APN}\quad\Longleftrightarrow\quad \theta\notin S_n.
\end{equation}
\end{theorem}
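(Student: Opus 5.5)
The plan is to specialize Theorem~\ref{thm:lowrank} to rank $r=1$ and then identify the resulting forbidden set with $S_n$ through a factorization of $c^3+d^3$ over $E$. Take $F(x)=x^3$, which is a normalized quadratic APN function for every $n$ since $\gcd(1,n)=1$. Take $q=(Q_n)$ and let $U:\F_2\to K$ be the linear map $1\mapsto\theta$, so that $G=F_{n,\theta}$.

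With $y=1$ the only nonzero element of $\F_2$, Theorem~\ref{thm:lowrank} (equivalently Corollary~\ref{cor:forbidden}) says that $F_{n,\theta}$ fails to be APN exactly when $\theta\in\cB_1$. Here $\cB_1$ is the set of values $B_{x^3}(a,x)$ over pairs $a\ne0$ with $B_Q(a,x)=1$. Such $x$ automatically lies outside $\langle a\rangle$, because the polar form is alternating. So $a,x$ span a two-dimensional subspace $W$, and $B_{x^3}(a,x)=\pi(W)$.

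By Lemma~\ref{lem:polarQ}, $B_Q(a,x)=1$ means that $T(a)$ and $T(x)$ are $\F_2$-independent. Since $E$ is two-dimensional over $\F_2$, this is equivalent to $T|_W:W\to E$ being a bijection, a condition that depends only on $W$. Hence
\[
 \cB_1=\set{\pi(W): \dim_{\F_2}W=2,\ T(W)=E}.
\]
It remains to prove that this set equals $S_n$.

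The key identity is a factorization of $c^3+d^3$ whose three factors sum to zero:
\[
 c^3+d^3=(c+d)(\lambda c+\lambda^2 d)(\lambda^2 c+\lambda d).
\]
This holds because $\lambda\cdot\lambda^2=1$ and $\lambda+\lambda^2=1$. The three factors add to $(1+\lambda+\lambda^2)(c+d)=0$.

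For the inclusion $S_n\subseteq\cB_1$, let $T(c)=1$ and $T(d)=0$. Since $T$ is $E$-linear, the factors have traces $1$, $\lambda$ and $\lambda^2$. In particular they are nonzero and pairwise distinct, and the third is the sum of the first two. Thus they are exactly the nonzero elements of the subspace $W=\langle c+d,\lambda c+\lambda^2 d\rangle$. This $W$ satisfies $T(W)=E$, and $\pi(W)=c^3+d^3$.

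For the reverse inclusion, suppose $T(W)=E$. Pick $u,v\in W$ with $T(u)=1$ and $T(v)=\lambda$, and solve
\[
 c+d=u,\qquad \lambda c+\lambda^2 d=v.
\]
The determinant is $\lambda^2+\lambda=1$, so the system has a unique solution $(c,d)\in K^2$. Applying $T$ to both equations gives $T(c)+T(d)=1$ and $\lambda T(c)+\lambda^2T(d)=\lambda$, which force $T(d)=0$ and $T(c)=1$. The three factors of $c^3+d^3$ are then $u$, $v$ and $u+v$, so $\pi(W)=c^3+d^3\in S_n$.

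The main obstacle is finding a parametrization in which the three nonzero elements of $W$ appear as the three linear factors of $c^3+d^3$. The naive factorization $(c+d)(c+\lambda d)(c+\lambda^2 d)$ fails, because its factors sum to $c$ rather than to $0$. The twisted version above fixes this, and once it is in place the rest is bookkeeping with the $E$-linearity of $T$. The hypothesis $\theta\ne0$ is needed only so that the statement is about a genuine switching; for $\theta=0$ the map is APN, while $0\in S_n$ can occur (for instance when $c^3=d^3$).
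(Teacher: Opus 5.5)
Your proof is correct and is essentially the paper's argument: your $u=c+d$ and $v=\lambda c+\lambda^2 d$ are exactly $\phi(1)$ and $\phi(\lambda)$ for the paper's inverse section $\phi(y)=cy+dy^2$ of $T|_W$, with your $2\times 2$ linear solve replacing the appeal to the unique linearized representation of $\F_2$-linear maps $E\to K$, and your use of Corollary~\ref{cor:forbidden} replacing the direct kernel equation. One correction to your closing remark: in fact $0\notin S_n$, because the three factors have traces $1,\lambda,\lambda^2$ and are therefore nonzero (equivalently, $c^3=d^3$ would force $c\in E^*d$ and hence $T(c)=0$), so the criterion also holds for $\theta=0$.
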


\begin{proof}
An extra kernel vector for the derivative in direction $a$ spans with $a$ a two-dimensional subspace $W$.  The kernel equation is
\[
 \pi(W)+\theta B_Q(a,x)=0.
\]
Since $\pi(W)\ne0$, this occurs exactly when $B_Q(a,x)=1$ and $\pi(W)=\theta$.  By Lemma~\ref{lem:polarQ}, the first condition says that $T|_W:W\to E$ is an isomorphism of two-dimensional $\F_2$-spaces.  Every $\F_2$-linear map $E\to K$ has a unique linearized representation $\phi(y)=cy+dy^2$, because the two Frobenius monomials $y$ and $y^2$ form a basis of $\operatorname{Hom}_{\F_2}(E,K)$ over $K$. The identity $T\circ\phi=\operatorname{id}_E$ is equivalent to
\[
 T(c)y+T(d)y^2=y\qquad(y\in E),
\]
which in turn is equivalent to $T(c)=1$ and $T(d)=0$. Finally, using $1+\lambda+\lambda^2=0$ and $\lambda^3=1$,
\begin{align*}
 \phi(1)\phi(\lambda)\phi(\lambda^2)
 &=(c+d)(c\lambda+d\lambda^2)(c\lambda^2+d\lambda)\\
 &=c^3+d^3.
\end{align*}
Thus the possible products of transverse sections are exactly $S_n$. Geometrically, $\pi(W)$ records the multiplicative label of a two-dimensional section $W$ transverse to the relative-trace fibres.
\end{proof}

\section{The exceptional dimensions}\label{sec:small}

\subsection{Dimension four}

\begin{proposition}\label{prop:n4}
For $n=4$ and $\theta\ne0$,
\[
 F_{4,\theta}\text{ is APN}
 \quad\Longleftrightarrow\quad
 \Tr_{\F_{16}/\F_2}(\theta)=0.
\]
Every admissible map is EA-equivalent to $x^3$.
\end{proposition}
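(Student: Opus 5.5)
By Theorem~\ref{thm:section} it suffices to show that, for $n=4$, the set $S_4$ is exactly the set of trace-one elements of $\F_{16}$. I would first show $S_4\subseteq\{\Tr_{\F_{16}/\F_2}=1\}$ by direct computation, then get equality by counting. Here $K=\F_{16}$, $E=\F_4$ and $T(x)=x+x^4$. The condition $T(d)=0$ means $d^4=d$, i.e.\ $d\in E$, so $d^3\in\{0,1\}$. Since $n$ is even, $\Tr_{K/\F_2}(1)=0$, and hence $\Tr_{K/\F_2}(d^3)=0$ in both cases.

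For $c$ with $T(c)=1$ we have $c^4=c+1$ and $c^8=c^2+1$. Therefore
\[
 \Tr_{K/\F_2}(c^3)=c^3+(c^4)^3+(c^2)^3+(c^8)^3=c^3+(c+1)^3+c^6+(c^2+1)^3 .
\]
This collapses to $(c^2+c+1)+(c^4+c^2+1)=c^4+c=1$. Hence every element $c^3+d^3$ of $S_4$ has absolute trace one.

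For the reverse inclusion I would count. The solutions of $T(c)=1$ form a coset $c_0+E$ of size $4$. The cube map is injective on this coset: $c'^3=c^3$ forces $c'=\omega c$ with $\omega\in E^*$, and then $(\omega+1)c\in E$, which for $\omega\neq1$ gives $c\in E$, contradicting $T(c)=1$. This yields four trace-one values $c^3$. The remaining four come from $c^3+1$. It remains to check that $\{c^3\}$ and $\{c^3+1\}$ are disjoint, which is a finite check over the four $c$. Equivalently, one can simply enumerate the $16$ pairs $(c,d)$ and confirm that all $8$ trace-one elements occur. Thus $S_4$ is the trace-one hyperplane, and for $\theta\neq0$ the map $F_{4,\theta}$ is APN exactly when $\Tr(\theta)=0$; this is the seven-element set of Figure~\ref{fig:dimension-ladder}. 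The only delicate point here is this surjectivity count.

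For the EA statement, I would invoke the known classification of APN functions in dimension four: every APN map on $\F_2^4$ is CCZ-equivalent to $x^3$. Each admissible $F_{4,\theta}$ is quadratic, so Yoshiara's theorem \cite{Yoshiara2012} upgrades CCZ-equivalence to EA-equivalence with $x^3$. Should a self-contained argument be preferred, the fallback is to compare ortho-derivative spectra for the seven admissible $\theta$ (or a Frobenius-orbit representative of each) with that of $x^3$, or to exhibit explicit affine equivalences. I regard this computational fallback, rather than the trace identity, as the main obstacle if the classification result is not allowed.
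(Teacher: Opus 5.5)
Your argument is correct, but it takes a different route from the paper.

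\textbf{The paper's route.} The paper never computes $S_4$ directly. For $n=4$ the polar form $\tr(T(a)T(x)^2)$ of $Q_4$ coincides with the polar form $\Tr_{\F_{16}/\F_2}(a^2x+ax^2)$ of $\Tr_{\F_{16}/\F_2}(x^3)$. Hence $Q_4(x)=\Tr_{\F_{16}/\F_2}(x^3)+\ell(x)$ with $\ell$ linear, and
\[
F_{4,\theta}(x)=B_\theta(x^3)+\theta\ell(x),\qquad B_\theta(y)=y+\theta\Tr_{\F_{16}/\F_2}(y).
\]
The map $B_\theta$ is invertible exactly when $\Tr_{\F_{16}/\F_2}(\theta)=0$. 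That one identity gives sufficiency and the EA-equivalence with $x^3$ at once. It is explicit and self-contained, and it shows why dimension four is ``EA-trivial'': the switching is only an output-linear reparametrization of the Gold cube. The necessity direction, by contrast, is only sketched in the paper, with a brief appeal to the section-product criterion.

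\textbf{What your route buys.} Your identification of $S_4$ with the trace-one hyperplane makes necessity fully transparent. You use the computation $\Tr(c^3)=c^4+c=1$ and the count $4+4=8$. The disjointness check you defer can be done by hand. The four values $c^3$ are nonzero cubes different from $1$, so they are exactly $\mu_5\setminus\{1\}$. For $\zeta\in\mu_5\setminus\{1\}$, the elements $1,\zeta,\zeta^2,\zeta^3$ form an $\F_2$-basis, with $\zeta^4=1+\zeta+\zeta^2+\zeta^3$. Hence $1+\zeta\notin\mu_5$.

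\textbf{The EA statement.} Here you import the computer-assisted classification of APN functions in dimension four (Brinkmann--Leander) together with Yoshiara's theorem. This is valid but much heavier than needed. The ``explicit affine equivalence'' you list as a fallback is precisely the paper's identity, which you could verify by comparing the two polar forms above. Your other fallback would not work as stated. Equality of ortho-derivative spectra is only a necessary condition for EA-equivalence, so matching spectra cannot prove that $F_{4,\theta}$ is EA-equivalent to $x^3$.
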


\begin{proof}
In $\F_{16}$, direct polarization gives
\[
 Q_4(x)=\Tr_{\F_{16}/\F_2}(x^3)+\ell(x)
\]
for a binary linear form $\ell$.  Hence
\[
 F_{4,\theta}(x)=B_\theta(x^3)+\theta\ell(x),
 \qquad B_\theta(y)=y+\theta\Tr_{\F_{16}/\F_2}(y).
\]
The rank-one linear update $B_\theta$ is invertible if and only if $\Tr(\theta)=0$.  In that case this identity is an EA equivalence.  If $\Tr(\theta)=1$, $B_\theta$ is singular and the section-product criterion gives a bad derivative.  The seven admissible nonzero coefficients are therefore the nonzero elements of the trace-zero hyperplane.
\end{proof}

\subsection{Dimension six}

\begin{theorem}\label{thm:n6}
For $n=6$ and $\theta\ne0$, the following are equivalent:
\begin{enumerate}
\item $F_{6,\theta}$ is APN;
\item $\theta^6+\theta^3+1=0$;
\item $\theta^3\in\F_4\setminus\F_2$;
\item $\ord(\theta)=9$.
\end{enumerate}
Thus exactly six coefficients are admissible, and they form one Frobenius orbit.
\end{theorem}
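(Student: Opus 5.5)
The plan is to split the statement into an elementary part, the equivalence of (2), (3) and (4), and a substantive part, the equivalence of (1) with the others. The substantive part goes through Theorem~\ref{thm:section}.

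\textbf{The elementary equivalences.} In $K=\F_{64}$ we have $K^*$ cyclic of order $63=9\cdot7$.
\begin{enumerate}
\item Since $\theta^6+\theta^3+1=(\theta^3)^2+\theta^3+1$, condition (2) says that $\theta^3$ is a root of $X^2+X+1$, that is, $\theta^3\in E\setminus\F_2$. So (2)$\Leftrightarrow$(3).
\item Condition (3) says $\theta^3$ has order exactly $3$. That is equivalent to $\theta^9=1$ and $\theta^3\ne1$, i.e.\ $\ord(\theta)=9$, giving (3)$\Leftrightarrow$(4).
\item There are $\varphi(9)=6$ such elements. Squaring permutes them, and $2$ has order $6$ modulo $9$, so they form a single Frobenius orbit.
\end{enumerate}

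\textbf{Reduction to a statement about $S_6$.} By Theorem~\ref{thm:section}, it remains to prove
\[
S_6\setminus\set0=K^*\setminus\set{\theta:\ord(\theta)=9}.
\]
First, $S_6$ is Frobenius-stable. Since $T$ commutes with squaring and $1,0$ are fixed by squaring, $(c,d)\mapsto(c^2,d^2)$ preserves the conditions $T(c)=1$, $T(d)=0$, and it sends $c^3+d^3$ to $(c^3+d^3)^2$. So membership in $S_6$ is constant on Frobenius orbits of $K^*$. Counting by multiplicative order ($1$; $3$; $7$; $9$; $21$ giving two orbits; $63$ giving six orbits), there are $12$ orbits.

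\textbf{The two inclusions.}
\begin{enumerate}
\item For each of the $11$ orbits not of order nine, I will exhibit one explicit witness pair $(c,d)$ with $T(c)=1$, $T(d)=0$ and $c^3+d^3$ equal to the orbit representative. These $11$ witnesses (with $\theta=1$ handled by a trivial choice if one exists) give the inclusion ``$\supseteq$''.
\item For ``$\subseteq$'' I must show that $c^3+d^3$ never has order $9$. The fibres $T^{-1}(1)$ and $T^{-1}(0)$ each have $16$ elements, so this is a $256$-pair finite check and can be certified exhaustively.
\end{enumerate}

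\textbf{Conceptual route and expected obstacle.} I would first look for a conceptual replacement for the exhaustive check in the order-nine exclusion. Write $\theta=c^3+d^3$ with $T(d)=0$ and apply the norm $N_{K/E}(y)=y^{21}$, noting that $N_{K/E}(\theta)=\theta^{21}=\theta^3$ lies in $E\setminus\F_2$ exactly for order nine. Then try to show that the transversality constraint forces $N_{K/E}(c^3+d^3)\in\F_2$, for instance via the parametrization $\phi(y)=cy+dy^2$ and the section product $\phi(1)\phi(\lambda)\phi(\lambda^2)$.

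This exclusion is the step I expect to be the main obstacle. The surjectivity onto the other orbits is a routine search, whereas ruling out order nine requires either the full finite enumeration or a genuine identity linking $T(c)$, $T(d)$ and the cube norm. If no clean identity emerges, I will fall back on the certified enumeration, which is small enough to be fully explicit.
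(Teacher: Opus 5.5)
Your strategy is the paper's: reduce by Theorem~\ref{thm:section} to computing $S_6$, use Frobenius stability, and settle the $256$ pairs $(c,d)$ by a finite check compressed to orbit witnesses. Your treatment of (2)$\Leftrightarrow$(3)$\Leftrightarrow$(4) and of the single order-nine orbit is correct. Your orbit count, however, is wrong, and taken literally it leaves the inclusion $\supseteq$ unproved. The six elements of order $7$ (namely $\F_8\setminus\F_2$) form two Frobenius orbits, not one, since $2$ has order $3$ modulo $7$. So $K^*$ has $13$ orbits, and you need $12$ witnesses rather than $11$; this is exactly the paper's 12-row certificate. For $\theta=1$ the pair $(c,d)=(1,0)$ works, because $T(1)=[K:E]\cdot1=1$.

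The conceptual route you sketch for the exclusion cannot succeed. The norm $N_{K/E}(\theta)=\theta^{21}$ lies in $E\setminus\F_2$ whenever $9\mid\ord(\theta)$. That covers the $36$ elements of order $63$ as well as the six of order $9$, and your identity $\theta^{21}=\theta^3$ holds only for the latter. Forcing $N_{K/E}(c^3+d^3)\in\F_2$ would put $S_6\setminus\set{0}$ inside the $21$ elements of order dividing $21$, contradicting $\abs{S_6\setminus\set{0}}=57$.

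The exclusion is in fact the easy direction once you use the paper's coordinates. Choose $u$ with $u^3=\lambda$, so that $T(x_0+x_1u+x_2u^2)=x_0$ for $x_i\in E$. The fibres are then $c=1+pu+qu^2$ and $d=ru+su^2$ with $p,q,r,s\in E$. Expanding, $c^3+d^3$ has coordinates $1+\lambda(p^3+r^3)+\lambda^2(q^3+s^3)$, $p+\lambda q^2+\lambda p^2q+\lambda r^2s$ and $q+p^2+\lambda pq^2+\lambda rs^2$. The first vanishes only when exactly one of $p,r$ and exactly one of $q,s$ is nonzero. In each of those four cases the other two coordinates are both nonzero. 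For instance, when $r=s=0$ each is a sum of three elements of $E^*$ whose $\lambda$-exponents add up to $2$, respectively $1$, modulo $3$, so they cannot be $1,\lambda,\lambda^2$. Since the elements of order nine are exactly $E^*u\cup E^*u^2$, none of them is a section product. Only the coverage of the other $57$ elements then needs witnesses.
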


\begin{proof}
Let $E=\F_4=\F_2(\lambda)$ and choose $u$ with $u^3=\lambda$.  Then $K=E(u)=\F_{64}$ and
\[
 T(a+bu+cu^2)=a.
\]
Every pair in \eqref{eq:Sn} has the form
\[
 c=1+pu+qu^2,\qquad d=ru+su^2,
 \qquad p,q,r,s\in E.
\]
The $4^4=256$ products $c^3+d^3$ cover exactly 57 of the 63 nonzero elements.  The complement is
\[
 E^*u\ \cup\ E^*u^2.
\]
Every element of this complement has order nine, and conversely every element of order nine lies in it.  The 12-row Frobenius-orbit certificate in the Supplementary Material gives one exact witness for each represented orbit, so the finite coverage statement is independently checkable.  The equivalences of the four conditions are immediate from $u^3=\lambda$.
\end{proof}

\begin{remark}\label{rem:arshad6}
Example~5.14 of \cite{Arshad2018} lists the same six nonzero coefficients as $\beta^{7j}$ for $j\in\set{1,2,4,5,7,8}$, where $\beta$ has order 63.  Theorem~\ref{thm:n6} supplies the basis-independent order-nine characterization and a completeness proof.
\end{remark}

\subsection{Dimension eight}

\begin{theorem}\label{thm:n8}
For $n=8$ and $\theta\ne0$,
\[
 F_{8,\theta}\text{ is APN}
 \quad\Longleftrightarrow\quad
 \theta\in\F_4^*.
\]
\end{theorem}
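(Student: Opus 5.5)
By Theorem~\ref{thm:section}, we must show that $S_8=\set{c^3+d^3:T(c)=1,\ T(d)=0}$ contains every nonzero element of $K=\F_{256}$ except the three elements of $\F_4^*$, and that it contains none of those three.

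\textbf{Step 1: $\F_4^*$ is excluded.} Assume $c^3+d^3=\theta\in E^*$ with $T(c)=1$ and $T(d)=0$, where $T=\Tr_{K/E}$ and $[K:E]=4$. Since $\theta$ is a cube in $E$, and every element of $E^*$ is a cube in $K$, we may rescale. Write $\theta=\mu^3$ with $\mu\in K$ and set $c'=c/\mu$, $d'=d/\mu$; this reduces the problem to a point on the Fermat cubic $c'^3+d'^3=1$. The rescaling does not preserve the trace conditions, however.

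A cleaner route uses the section description. The section $\phi(y)=cy+dy^2$ satisfies $T\circ\phi=\mathrm{id}$ and $\phi(1)\phi(\lambda)\phi(\lambda^2)=\theta$. If $\theta\in E$, then the norm $N_{K/E}$ of this product is $\theta^4=\theta$. On the other hand, with $\sigma$ the generator of $\operatorname{Gal}(K/E)$, the relations $\sum_{i}\sigma^i(c)=1$ and $\sum_i\sigma^i(d)=0$ should force a contradiction. Concretely, I would try to expand $N_{K/E}(c^3+d^3)$ as a symmetric expression in the conjugates and show that it is incompatible with the trace conditions.

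I expect this is awkward, so the fallback is the finite certificate approach the paper already uses for $n=6$. The whole of Step~1 is a finite check over the pairs $(c,d)$ with $T(c)=1$, $T(d)=0$, of which there are $4^3\cdot4^3=4096$. One verifies that $c^3+d^3\notin E^*$ for all of them. Exploiting Frobenius symmetry over $\F_2$ (the map $x\mapsto x^2$ preserves both trace conditions up to conjugation of $E$) and the $E^*$-action $(c,d)\mapsto(\omega c,\omega d)$ composed with suitable adjustments reduces this check considerably.

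\textbf{Step 2: every other nonzero element lies in $S_8$.} For each $\theta\in K^*\setminus E^*$, exhibit an explicit pair $(c,d)$ with $T(c)=1$, $T(d)=0$ and $c^3+d^3=\theta$. Now $S_8$ is stable under the Frobenius $x\mapsto x^2$: if $(c,d)$ is a witness for $\theta$, then $T(c^2)=T(c)^2=1$ and $T(d^2)=0$, so $(c^2,d^2)$ is a witness for $\theta^2$. It therefore suffices to give one witness per Frobenius orbit of $K^*\setminus E^*$. These 252 elements form exactly 33 orbits, matching the paper's 33-row certificate (orbits of sizes $8,4,2$ outside $\F_4$). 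I would choose a concrete model $K=E(u)$ with $u$ a root of an irreducible quartic over $E$, write $c=1/4\cdot$(adjusted) as $c=c_0+\dots$ in the basis where $T$ is a coordinate projection, search for a witness for each orbit representative, and record the witnesses. Verifying each one is then a single field computation.

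\textbf{Main obstacle.} The main difficulty is making Step~1 conceptual rather than computational: proving that no transverse section of $T$ has section product in $\F_4^*$. I would first attempt the norm argument, using that $\pi(W)\in E$ gives strong constraints when $W$ is paired with its Galois conjugates $\sigma(W)$. If that fails, I would accept the exhaustive check of 4096 pairs reduced by symmetry, which is consistent with the computer-assisted style the paper declares. Combining Steps~1 and~2 with Theorem~\ref{thm:section} gives $S_8=K^*\setminus E^*$, and hence $F_{8,\theta}$ is APN exactly when $\theta\in\F_4^*$.
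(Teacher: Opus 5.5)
Your argument is correct as a computer-assisted proof, and it follows the paper's outline. You reduce via Theorem~\ref{thm:section} to determining $S_8$, and you prove $K\setminus\F_4\subseteq S_8$ by one witness per Frobenius orbit; your Step~2 is exactly the paper's 33-row certificate.

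The difference lies in the direction $S_8\cap\F_4=\emptyset$. The paper does not enumerate sections. It eliminates $d$ between $T(d)=0$ and $(c^3+d^3)^4+(c^3+d^3)=0$, the latter saying that $c^3+d^3\in\F_4$. It then exhibits a B\'ezout identity showing that the resulting resultant in $c$ is coprime to $T(c)+1$. This suffices because the linearized polynomials $T(d)$ and $T(c)+1$ split into distinct linear factors over $K$, so any common root is a genuine pair in $K$.

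Your exhaustive check of the $4096$ pairs is more elementary. If you record the whole image, a single enumeration proves both inclusions at once. The B\'ezout identity is instead a compact symbolic certificate that can be verified by polynomial arithmetic without redoing the search. Neither route gives the conceptual explanation you were after: your norm idea is left unfinished, and the paper offers no substitute. So relying on the fallback costs nothing relative to the paper.

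There are a few incidental slips, none of which affects the final argument:
\begin{itemize}
\item The opening rescaling claims are false. The only cube in $E^*$ is $1$, and $\lambda$ is not a cube in $K$ because $\lambda^{85}=\lambda\neq1$. You abandon that route anyway.
\item The Frobenius orbits outside $\F_4$ have sizes $8$ and $4$ only ($30+3=33$). The size-two orbit $\{\lambda,\lambda^2\}$ lies inside $\F_4$.
\item Since $T(1)=0$ in dimension eight, the normalization ``$c=1/4\cdot(\ldots)$'' is meaningless. Take $c\in e+\ker T$ for a fixed $e$ with $T(e)=1$.
\item The $E^*$-symmetry you want is simply $(c,d)\mapsto(c,\omega d)$. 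It preserves both trace conditions and $c^3+d^3$.
\end{itemize}
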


\begin{proof}
By Theorem~\ref{thm:section}, it suffices to determine $S_8$.  Write a transverse section as $\phi(y)=cy+dy^2$, with $T(c)=1$ and $T(d)=0$.  Eliminating $d$ from
\[
 T(d)=0,
 \qquad (c^3+d^3)^4+(c^3+d^3)=0
\]
gives a resultant in $c$ whose greatest common divisor with $T(c)+1$ is one.  A complete B\'ezout identity is included in Online Resource~1.  Hence no transverse section product belongs to $\F_4$, proving that every $\theta\in\F_4^*$ is admissible.

For the converse, the $64\cdot64=4096$ pairs with $T(c)=1$ and $T(d)=0$ represent every element of $K\setminus\F_4$.  Frobenius invariance compresses this assertion to the 33 orbit witnesses in the Supplementary Material.  Thus $S_8=K\setminus\F_4$.
\end{proof}

\begin{remark}\label{rem:arshad8}
Example~5.24 of \cite{Arshad2018} gives $\set{1,\beta^{85},\beta^{170}}$ in $\F_{256}^*$.  Since $\beta$ has order 255, this is precisely $\F_4^*$.  Again, the prior computation is acknowledged; Theorem~\ref{thm:n8} gives the intrinsic interpretation and proof.
\end{remark}

The small dimensions can now be summarized as follows.
\begin{table}[ht]
\caption{Admissible nonzero coefficients in the scalar trace-product family}\label{tab:ladder}
\begin{tabular*}{\textwidth}{@{\extracolsep\fill}cll}
\toprule
$n$ & coefficient set & structural interpretation\\
\midrule
4 & $\ker\Tr_{\F_{16}/\F_2}\setminus\set0$ & EA-equivalent to Gold\\
6 & roots of $X^6+X^3+1$ & elements of order 9\\
8 & $\F_4^*$ & relative-trace coefficient fibre\\
even $n\ge10$ & empty & Theorem~\ref{thm:uniform}\\
\botrule
\end{tabular*}
\end{table}

\section{Uniform nonexistence above dimension eight}\label{sec:uniform}

For $\theta\ne0$, define
\begin{equation}\label{eq:hn}
 h_n(\theta)=\#\set{(c,d)\in K^2:T(c)=1,\ T(d)=0,\ c^3+d^3=\theta}.
\end{equation}
By Theorem~\ref{thm:section}, $F_{n,\theta}$ is APN if and only if $h_n(\theta)=0$.

\begin{lemma}[Character-sum bounds for the phase functions]\label{lem:charsum}
Let $C_\theta:X^3+Y^3=\theta Z^3$ and $f_{u,v}=(uX+vY)/Z$ with $(u,v)\ne(0,0)$. If exactly one of $u,v$ is nonzero, the affine additive-character sum is bounded by $6\sqrt q$. If $uv\ne0$, it is bounded by $4\sqrt q+1$.
\end{lemma}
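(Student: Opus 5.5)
The plan is to treat both estimates as instances of the Weil bound for Artin--Schreier covers of a curve, and to track the pole divisor of the phase $f_{u,v}$ precisely. Write $q=2^n=\#K$. Since $3\ne 2$ and $\theta\ne0$, the Fermat cubic $C_\theta$ is smooth of genus $g=1$. Its three points at infinity are $(\omega:1:0)$ with $\omega^3=1$, and they are $K$-rational because $n$ is even, so $E\subset K$. I will use the Bombieri--Weil estimate in the following form. Suppose $f$ is a rational function on a smooth projective curve of genus $g$ that is not of the form $h^2+h+c$, and after Artin--Schreier reduction has pole orders $m_P$, all odd. Then
\[
 \Bigl|\sum_{P\notin\operatorname{Poles}(f)}\psi(f(P))\Bigr|
 \le \Bigl(2g-2+\sum_{P}(m_P+1)\deg P\Bigr)\sqrt q .
\]
The affine sum in the lemma differs from this full sum only by points at infinity that are not poles of $f_{u,v}$, and each such point contributes at most $1$.

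\emph{Case $uv=0$.} Say $u\ne0=v$, so $f=uX/Z$. At $(\omega:1:0)$ the line $Z=0$ meets $C_\theta$ transversally and $X\ne0$ there, so $f$ has a simple pole at each of the three points at infinity. Simple poles are odd and cannot be reduced by $h^2+h$, so $f$ is not Artin--Schreier trivial. The bound is $(0+3\cdot2)\sqrt q=6\sqrt q$. Every point at infinity is a pole, so the affine sum coincides with the full sum, which gives $6\sqrt q$. The case $v\ne0=u$ is symmetric.

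\emph{Case $uv\ne0$.} Here I would work in explicit affine coordinates rather than on the projective model. Put $s=x+y$. In characteristic two,
\[
 x^3+y^3=s\,(s^2+xy).
\]
Hence on the affine curve $s\ne0$ (because $x=y$ would force $\theta=0$), and $xy=(\theta+s^3)/s$. Writing $x=st$, $y=s(t+1)$, the ordered affine points correspond bijectively to pairs $(s,t)$ with
\[
 t^2+t=1+\theta s^{-3}.
\]
Set $w=u+v$.

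\emph{Subcase $w\ne0$.} Detect the quadratic condition with $q^{-1}\sum_{\lambda}\psi(\lambda(\cdot))$ and sum over $t$ using $\psi(\lambda t^2)=\psi(\lambda^{1/2}t)$. This collapses the sum to a single variable $\mu\in K\setminus\set{0,1}$, with $\mu^2+\mu=ws$, and the phase becomes the rational function
\[
 g(\mu)=\frac{v(\mu^2+\mu)}{w}+\mu^2+\frac{\theta w^3}{\mu(\mu+1)^3}
\]
on $\mathbb P^1$. The bound then comes from the Weil estimate on $\mathbb P^1$ (where $2g-2=-2$) applied to $g$ after Artin--Schreier reduction. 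The reduction goes as follows. At $\infty$, replace $(v/w+1)\mu^2$ by $(u/w)^{1/2}\mu$, which leaves a simple pole since $u\ne0$. At $\mu=1$, reduce the $\varepsilon^{-2}$ term in the expansion in $\varepsilon=\mu+1$. The pole at $\mu=0$ is simple.

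\emph{Subcase $w=0$, i.e.\ $u=v$.} The phase is $\psi(us)$ and depends on $s$ alone. The sum becomes $\sum_{s\ne0}\psi(us)\bigl(1+(-1)^{\Tr(1+\theta s^{-3})}\bigr)$. The first part is $-1$. The second is a sum on $\mathbb P^1$ with phase $us+\theta s^{-3}$ (plus a constant), which has poles of orders $1$ and $3$, giving $(2+4-2)\sqrt q=4\sqrt q$. Altogether this yields $4\sqrt q+1$. This matches the projective picture: when $v=u\omega$ one point at infinity is a zero of $uX+vY$ and not a pole, which accounts for the extra $+1$.

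\emph{Main obstacle.} The hard step is the subcase $u\ne v$. The naive count above gives pole orders $1,1,3$ and therefore only $(2+2+4-2)\sqrt q=6\sqrt q$. To reach $4\sqrt q+1$ I would first try to show that the total Swan contribution of $g$ drops by two. Concretely, I would aim either for cancellation of the $\varepsilon^{-3}$ coefficient at $\mu=1$, or for a rational substitution in $\mu$ that merges the poles at $0$ and $\infty$, using $\psi(a^2)=\psi(a)$ and the relation $\mu^2+\mu=ws$. If neither works, I would fall back on decomposing the genus-four Artin--Schreier cover of $C_\theta$ under its $\mu_3$-automorphisms, and show that one two-dimensional isogeny factor of the Prym contributes trivially to the trace. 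This is exactly where the argument may fail, and I would verify it numerically for $n=10,12$ before relying on it.
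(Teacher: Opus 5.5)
There is a genuine gap. The off-axis subcase $u\neq v$ is left open, and the obstacle you describe is not the real one. The bound $4\sqrt q+1$ rests on a hypothesis that is implicit in the lemma: $u,v\in E=\F_4$, which is how the lemma is used in \eqref{eq:char-expansion}. You never use it. For generic $u,v\in K$ with $v/u\notin E^*$, the count $1,1,3$ you found is correct and the conductor really is $6$. So no Swan-conductor drop or Prym decomposition could yield $4\sqrt q+1$. Your first fallback is also impossible outright, because the $\varepsilon^{-3}$ coefficient at $\mu=1$ is $\theta w^3\neq0$.

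Your computation at $\infty$ contains a slip. After replacing $(u/w)\mu^2$ by $(u/w)^{1/2}\mu$, you must combine it with the term $(v/w)\mu$. The residual polar part is therefore $\bigl((u/w)^{1/2}+v/w\bigr)\mu$. It vanishes exactly when $uw=v^2$, i.e.\ $u^2+uv+v^2=0$, i.e.\ $v/u\in\{\lambda,\lambda^2\}$. For $u,v\in E^*$ with $u\neq v$ this always holds. Then $g$ has only a simple pole at $\mu=0$ and an order-$3$ pole at $\mu=1$. The Weil bound on $\mathbb P^1$ gives $(-2+2+4)\sqrt q=4\sqrt q$ for the sum over $\mathbb P^1(K)\setminus\{0,1\}$. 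The regular point $\mu=\infty$, which is omitted from your sum, supplies the $+1$.

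There is an even shorter route. For $u,v\in E^*$ the ratio $\omega=u/v$ satisfies $\omega^3=1$, so $(x,y)\mapsto(x,\omega y)$ preserves $x^3+y^3=\theta$. It turns $S_{u,v}(\theta)$ into $S_{u,u}(\theta)$, which your subcase $w=0$ already bounds correctly.

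This is the affine shadow of the paper's argument. On the genus-one model, the line $uX+vY=0$ is the flex tangent at $P_{v/u}$: substituting $X=(v/u)Y$ leaves $\theta Z^3=0$. Hence $\operatorname{div} f_{u,v}=2P_{v/u}-P_{t_1}-P_{t_2}$, where $P_{t_1},P_{t_2}$ are the other two points at infinity. There are two simple poles, the conductor degree is $4$, and $2g-2=0$, giving $4\sqrt q$. The regular point $P_{v/u}$ is excluded from the affine sum and accounts for the $+1$. You noticed exactly this picture for $\omega=1$ but did not see that every off-axis pair in $E^2$ is of that form.

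Your axis case and your subcase $u=v$ are correct as written.
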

\begin{proof}
The curve is smooth of genus one. In the axis case $f_{u,v}$ has three simple poles, so the Artin--Schreier conductor divisor has degree six and the standard Weil bound gives $6\sqrt q$. In the off-axis case the line $uX+vY=0$ is the flex tangent at one of the three points at infinity; the corresponding apparent pole cancels and two simple poles remain, giving conductor degree four and a complete-sum bound $4\sqrt q$. A function of the form $g^2+g+c$ cannot have a simple pole, so the associated sheaf is nontrivial in both cases. The cancelled point is regular and contributes $1$ to the complete projective sum, whereas the sum used below is affine; this yields $4\sqrt q+1$. See \cite[Thm.~5.2.3]{Stichtenoth2009} and the additive-character formulation in \cite{Lachaud1992}.
\end{proof}

\begin{theorem}[Dimension rigidity]\label{thm:uniform}
For every even $n\ge10$ and every $\theta\ne0$, one has $h_n(\theta)>0$.  Consequently, $F_{n,\theta}$ is not APN.
\end{theorem}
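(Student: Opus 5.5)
The plan is to count $h_n(\theta)$ from \eqref{eq:hn} with additive characters of $E$, so that the two relative-trace conditions become twists of a sum over the affine Fermat cubic $c^3+d^3=\theta$. Write $q=2^n$ and $\psi(z)=(-1)^{\Tr_{K/\F_2}(z)}$. By trace transitivity $\Tr_{K/\F_2}=\tr\circ T$, and $\tr(wy)$ is a nondegenerate pairing on $E$. Hence, for $e\in E$,
\[
 \mathbf 1[T(c)=e]=\tfrac14\sum_{w\in E}(-1)^{\tr(we)}\,\psi(wc),
\]
since $(-1)^{\tr(wT(c))}=\psi(wc)$ for $w\in E$. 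Applying this with $e=1$ for $c$ and $e=0$ for $d$ gives
\[
 16\,h_n(\theta)=\sum_{u,v\in E}(-1)^{\tr(u)}\,S(u,v),
 \qquad S(u,v)=\sum_{\substack{c,d\in K\\ c^3+d^3=\theta}}\psi(uc+vd).
\]
I will then separate the main term $(u,v)=(0,0)$ from the fifteen twisted sums.

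For the main term, $S(0,0)=N$ is the number of affine points of $C_\theta:X^3+Y^3=\theta Z^3$. Since $\theta\ne0$ and the characteristic is $2\ne3$, $C_\theta$ is smooth of genus one, and Hasse--Weil gives at least $q+1-2\sqrt q$ projective points. Because $n$ is even, $K$ contains $\F_4^*$, the cube roots of unity, so the line $Z=0$ meets $C_\theta$ in exactly three rational points. Thus
\[
 N\ge q-2-2\sqrt q .
\]
For the error terms, each $S(u,v)$ with $(u,v)\ne(0,0)$ is exactly the affine sum of Lemma~\ref{lem:charsum} for $f_{u,v}=(uX+vY)/Z$, with $u,v\in E\subset K$. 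There are $6$ axis pairs (exactly one of $u,v$ nonzero), each bounded by $6\sqrt q$. There are $9$ off-axis pairs, each bounded by $4\sqrt q+1$. The signs $(-1)^{\tr(u)}$ are irrelevant once absolute values are taken, so
\[
 16\,h_n(\theta)\ \ge\ q-2-2\sqrt q-36\sqrt q-9(4\sqrt q+1)\ =\ q-74\sqrt q-11 .
\]

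This is positive as soon as $\sqrt q>74.2$, that is, for $q\ge 2^{13}$. For even $n$ it therefore holds for all $n\ge14$; for instance, at $n=14$ we have $\sqrt q=128$ and $16384-9472-11>0$. The bound is monotone in $q$, so the analytic range is settled. The remaining even dimensions $n=10,12$ have $\sqrt q=32,64$, where the estimate fails. For these I will use the exact finite bridges. By Theorem~\ref{thm:section} it suffices to show that $S_n\supseteq K^*$, and by Frobenius invariance of $S_n$ it is enough to exhibit one witness pair $(c,d)$ with $T(c)=1$, $T(d)=0$, $c^3+d^3=\theta$ for a representative $\theta$ of each Frobenius orbit on $K^*$, as was done for $n=6,8$. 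The final conclusion then follows from $h_n(\theta)>0$ and Theorem~\ref{thm:section}.

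The main obstacle is not the final arithmetic but the validity of Lemma~\ref{lem:charsum} in the specific form used here. One must check that every twisted sum is genuinely nontrivial, i.e.\ that $f_{u,v}$ is never of the form $g^2+g+\text{const}$ on $C_\theta$, which holds because it has a simple pole. One must also check that the affine sum differs from the complete projective sum only by the contributions at the three points at infinity, accounted for as in the lemma. Any loss there (for example, using $6\sqrt q$ uniformly, which gives $q-2-92\sqrt q-\dots$) would push the analytic threshold to $n\ge14$ still, but only barely. So the constant $74$ must be tracked carefully, and the bridges for $n=10,12$ are unavoidable in either case.
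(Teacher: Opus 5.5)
Your proposal is correct and follows the paper's proof essentially verbatim. It uses the same $E$-character expansion, the same Hasse--Weil main term $q-2-2\sqrt q$, and the same split into $6$ axis and $9$ off-axis sums via Lemma~\ref{lem:charsum}, giving $16h_n(\theta)\ge q-11-74\sqrt q>0$ for even $n\ge14$.

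The only difference is in the bridges $n=10,12$. You certify $K^*\subseteq S_n$ with one witness per Frobenius orbit, which is valid because squaring preserves $T(c)=1$ and $T(d)=0$ and squares $c^3+d^3$. The paper instead computes the full distribution of $h_n$ by XOR-convolution, with minima $46$ and $208$.

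Your closing worry about tracking the constant $74$ is unfounded. The crude uniform bound $q-2-92\sqrt q$ is already positive at $n=14$ (it equals $4606$ there), so the flex-tangent refinement does not change which dimensions need bridges.
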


\begin{proof}
Let $q=2^n$ and $\psi(z)=(-1)^{\Tr_{K/\F_2}(z)}$.  Additive-character orthogonality on $E$ gives
\begin{equation}\label{eq:char-expansion}
 h_n(\theta)=\frac1{16}\sum_{u,v\in E}\trchar(u)S_{u,v}(\theta),
 \qquad
 S_{u,v}(\theta)=\sum_{c^3+d^3=\theta}\psi(uc+vd),
\end{equation}
where $\trchar(u)=(-1)^{\tr(u)}$.

Consider the projective Fermat cubic
\[
 C_\theta:X^3+Y^3=\theta Z^3.
\]
It is smooth of genus one.  Since $E\subset K$, the line at infinity contains the three rational points
\[
 P_t=[t:1:0],\qquad t\in E^*.
\]
The affine point count is therefore
\[
 N^{\mathrm{aff}}_\theta=q-2+e_\theta,
 \qquad \abs{e_\theta}\le2\sqrt q.
\]

For a nonzero phase put $f_{u,v}=uX/Z+vY/Z$.  If exactly one of $u,v$ is nonzero, $f_{u,v}$ has three simple poles, one at each $P_t$.  If $u,v\ne0$, the numerator vanishes at $P_{v/u}$.  The line $uX+vY=0$ is the flex tangent there: substituting $X=(v/u)Y$ into the cubic gives $\theta Z^3=0$.  Hence the numerator has intersection multiplicity three, the apparent pole cancels, and only the other two simple poles remain.

Lemma~\ref{lem:charsum} therefore gives
\[
 \abs{S_{u,v}(\theta)}\le6\sqrt q
 \quad\text{for the six axis pairs},\qquad
 \abs{S_{u,v}(\theta)}\le4\sqrt q+1
 \quad\text{for the nine off-axis pairs}.
\]
The zero phase contributes the affine point count.  Thus
\begin{align*}
 16h_n(\theta)
 &\ge q-2-2\sqrt q-6\cdot6\sqrt q-9(4\sqrt q+1)\\
 &=q-11-74\sqrt q.
\end{align*}
This is positive for every even $n\ge14$.

The two remaining dimensions are exact finite bridge cases.  XOR-convolution of the two trace fibres gives
\[
 \min_{\theta\ne0}h_{10}(\theta)=46,
 \qquad
 \min_{\theta\ne0}h_{12}(\theta)=208.
\]
The full histograms and hashes are included in Online Resource~1.  Hence $h_n(\theta)>0$ for every even $n\ge10$.
\end{proof}

\section{Projective rank-two extensions in dimension eight}\label{sec:projective}

Throughout this section use the tower
\[
 \F_2\subset E=\F_4\subset L=\F_{16}\subset K=\F_{256}.
\]
For $c\in L^*$ define
\begin{equation}\label{eq:Qc}
 Q_c(x)=\tr(T(cx))\,\tr(\lambda T(cx)).
\end{equation}
If $c$ is multiplied by an element of $E^*$, the new form differs from $Q_c$ by a linear Boolean function.  Hence its polar class depends only on the projective point $[c]\in\mathbb P_E(L)\cong\mathbb P^1(E)$.

Fix the base point $[1]$.  For a non-base point $p=[\rho]$, define
\begin{equation}\label{eq:sdelta}
 s(p)=\rho^6,
 \qquad
 \delta(p)=\Tr_{L/E}(s(p))^2.
\end{equation}
These quantities are independent of the representative: $\alpha^3=1$ for $\alpha\in E^*$.  Moreover $s(p)$ is the unique element of order five in the projective class.

For $\eta,\zeta\in E^*$ put
\begin{equation}\label{eq:ranktwo}
 G_{p,\eta,\zeta}(x)
 =x^3+\eta Q_1(x)+\zeta\rho^{-3}Q_\rho(x).
\end{equation}
The normalization $\rho^{-3}$ makes both rank-one margins individually APN by Theorem~\ref{thm:n8}.

\begin{figure}[ht]
\centering
\begin{tikzpicture}[>=Latex, every node/.style={font=\small}, p/.style={circle,draw,minimum size=8mm}, acc/.style={rounded corners,draw,minimum width=8mm,minimum height=5mm}]
\node[p] (b) at (0,0) {$[1]$};
\foreach \x/\lab in {2/$p_1$,4/$p_2$,6/$p_3$,8/$p_4$}{\node[p] (p\x) at (\x,0) {\lab};}
\foreach \x in {2,4,6,8}{
  \node[acc] (a\x) at (\x,1.25) {$\checkmark$};
  \node[acc] (c\x) at (\x,-1.25) {$\checkmark$};
  \node at (\x,-2.05) {$\times$};
  \draw[-] (p\x) -- (a\x);
  \draw[-] (p\x) -- (c\x);
}
\draw[->,bend left=18] (a2) to (a4);
\draw[->,bend left=18] (a4) to (a6);
\draw[->,bend left=18] (a6) to (a8);
\draw[->,bend left=18] (a8) to (a2);
\draw[->,bend right=18] (c2) to (c4);
\draw[->,bend right=18] (c4) to (c6);
\draw[->,bend right=18] (c6) to (c8);
\draw[->,bend right=18] (c8) to (c2);
\node[align=center] at (4,2.15) {two accepted values at each non-base point\\$\operatorname{tr}(\delta(p)\eta)=1$};
\node[align=center] at (4,-2.65) {the third synchronized value is rejected};
\end{tikzpicture}
\caption{Projective organization in dimension eight. Each of the four non-base points of $\mathbb P^1(\F_4)$ supports two accepted coefficient values and one rejected value. Frobenius organizes the eight accepted marked points into two four-cycles, which become the two EA/CCZ classes of Section~\ref{sec:partition}.}
\label{fig:projective-selector}
\end{figure}

\begin{theorem}[Projective selector]\label{thm:projective}
For every $p=[\rho]\ne[1]$ and $\eta,\zeta\in E^*$,
\begin{equation}\label{eq:projective-selector}
 G_{p,\eta,\zeta}\text{ is APN}
 \quad\Longleftrightarrow\quad
 \eta=\zeta
 \ \text{and}\ 
 \tr\bigl(\delta(p)\eta\bigr)=1.
\end{equation}
Consequently each of the four non-base points supports exactly two normalized APN lifts, for a total of eight marked switchings.
\end{theorem}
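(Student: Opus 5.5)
Apply Theorem~\ref{thm:lowrank} with $r=2$. Take $q=(Q_1,Q_\rho)$ and $U(y_1,y_2)=y_1\eta+y_2\zeta\rho^{-3}$, with centre $F=x^3$. Corollary~\ref{cor:forbidden} then reduces APN-ness of $G_{p,\eta,\zeta}$ to three membership tests. The tests for $y=(1,0)$ and $y=(0,1)$ are the rank-one margins, which are already known.

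For $y=(1,0)$, the condition is $\eta\notin S_8$, and this always holds by Theorem~\ref{thm:n8}. For $y=(0,1)$, substitute $x\mapsto \rho^{-1}x$ (the polar of $Q_\rho$ becomes that of $Q_1$). This turns the test into $\zeta\notin S_8$ up to the scalar $\rho^{-3}$, which the normalization absorbs, so it also holds.

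All the content is in the mixed test $y=(1,1)$. Using Lemma~\ref{lem:polarQ} for both forms, the relevant two-dimensional sections are those $W=\langle a,x\rangle$ on which both $x\mapsto T(x)$ and $x\mapsto T(\rho x)$ restrict to isomorphisms $W\to E$. The equation to avoid is then
\[
 \pi(W)=\eta+\zeta\rho^{-3}.
\]

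The plan is to parametrize these doubly transverse sections as in the proof of Theorem~\ref{thm:section}. Write $\phi(y)=cy+dy^2$ with $T(c)=1$ and $T(d)=0$, and add the second condition $T(\rho\phi(y))=y$. Since $y\mapsto T(\rho\phi(y))$ is $\F_2$-linear $E\to E$, this condition can be written with a matrix $M\in GL_2(\F_2)$, giving $T(\rho c)y+T(\rho d)y^2=M(y)$. The correct reading is that $T(\rho\phi(\cdot))$ must be a bijection, not the identity. So the set of attainable values $c^3+d^3$ splits according to which of the six elements of $GL_2(\F_2)$, i.e.\ which of the six linearized maps $\alpha y+\beta y^2$ on $E$, occurs.

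Next, determine the mixed forbidden set
\[
 \cB_{(1,1)}=\{c^3+d^3:\ T(c)=1,\ T(d)=0,\ T(\rho\phi)\ \text{bijective}\},
\]
and intersect it with the nine values $\eta+\zeta\rho^{-3}$, $\eta,\zeta\in E^*$. The aim is to show:
\begin{enumerate}
\item for $\eta\ne\zeta$, the value $\eta+\zeta\rho^{-3}$ lies in $\cB_{(1,1)}$;
\item for $\eta=\zeta$, the value $\eta(1+\rho^{-3})$ lies in $\cB_{(1,1)}$ exactly when $\tr(\delta(p)\eta)=0$.
\end{enumerate}
For the second point, the natural route is to note that $\eta^{-1}\cdot$ acts on $E$ and that $(1+\rho^{-3})$ relates to $s=\rho^6$. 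Then $\Tr_{L/E}$ of the cube structure should reduce membership to a single $E$-valued quantity, which is $\delta(p)=\Tr_{L/E}(s)^2$.

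Realistically I would certify this by the same Frobenius-orbit compression used for Theorem~\ref{thm:n8}. The Frobenius $x\mapsto x^2$ permutes the four non-base points of $\mathbb P^1(E)$ transitively, since they form a single orbit of $\mathrm{Gal}(L/\F_2)$ on $\mathbb P_E(L)\setminus\{[1],\text{?}\}$. I would check that the selector $\tr(\delta(p)\eta)$ is Frobenius-equivariant: squaring sends $\delta\mapsto\delta^2$ and $\eta\mapsto\eta^2$, and $\tr$ is invariant. Then it suffices to verify the nine pairs $(\eta,\zeta)$ at one point $p$, by $9$ exact membership tests against $\cB_{(1,1)}$ (each a witness $(c,d)$ or a B\'ezout-type non-membership certificate). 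The count then follows immediately, since exactly two $\eta\in E^*$ satisfy $\tr(\delta\eta)=1$ when $\delta\ne0$. This requires $\delta(p)\ne0$, i.e.\ $\Tr_{L/E}(s)\ne0$, which I would check directly for order-five $s$.

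The main obstacle is a conceptual rather than certificate-based proof of the $\eta=\zeta$ case, namely the exact trace condition. I would first try to use the $E$-linear structure of $L$ over $E$ to reduce both transversality conditions to a $2\times2$ system over $E$ involving $\Tr_{K/L}$. Failing that, I would fall back on the orbit-compressed computation.
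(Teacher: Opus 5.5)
Your reduction is correct and follows the paper's route. The two margin tests pass because $\cB_{(1,0)}$ and $\rho^{3}\cB_{(0,1)}$ are contained in $S_8=K\setminus\F_4$; strictly, the condition is $\eta\notin\cB_{(1,0)}$, which $\eta\notin S_8$ implies. Everything then sits in the mixed test, and doubly transverse sections split over the six transitions $y\mapsto\alpha y$, $y\mapsto\alpha y^2$.

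Two small repairs are needed. First, the four non-base points are all of $\mathbb P_E(L)\setminus\{[1]\}$, and squaring permutes them as one $4$-cycle. Second, the Frobenius transport needs $G_{p,\eta,\zeta}$ to be EA-equivalent to $G_{p^2,\eta^2,\zeta^2}$, not only the selector to be Frobenius-invariant. The computation behind Proposition~\ref{prop:EAcycle} gives this equally off the diagonal.

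The gap is that the proposal stops where the content of the theorem begins. Claims (1) and (2) are stated only as aims, and the remark about $\Tr_{L/E}$ of the cube structure is not an argument. The only concrete route is an unexecuted finite check. That check would be a legitimate computer-assisted proof: at one point there are just $6\cdot 256$ doubly transverse pairs $(c,d)$, so direct enumeration is simpler than B\'ezout certificates. But it gives no reason why the selector has the stated form.

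The missing idea is essentially the $2\times 2$ reduction you were hoping for. Take the order-five representative $r$ of $p$ and write $K=L(w)$ with $w^2+w=r$.
\begin{itemize}
\item Since $\Tr_{K/L}(u+vw)=v$, one has $T(u+vw)=\Tr_{L/E}(v)$ and $T(r(u+vw))=\Tr_{L/E}(rv)$. So both transversality conditions fix only the $w$-coordinates $v_c,v_d$, through the $E$-isomorphism $v\mapsto(\Tr_{L/E}(v),\Tr_{L/E}(rv))$ from $L$ to $E^2$. The coordinates $u_c,u_d\in L$ stay free.
\item All nine targets $\eta+\zeta r^{-3}$ lie in $L$. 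The $w$-coordinate of $(u+vw)^3$ is $u^2v+uv^2+(1+r)v^3$, so a bad section needs the $w$-coordinate of $c^3+d^3$ to vanish.
\item For a linear transition, $T(d)=T(rd)=0$ forces $v_d=0$. Vanishing then becomes $z^2+z+1+r=0$ with $z=u_c/v_c\in L$, which is impossible since $\Tr_{L/\F_2}(1+r)=1$. This is the paper's Artin--Schreier step, and it shows that only the three semilinear involutions can give bad sections.
\item For a semilinear involution, $v_d\in E^*$, so $v_d^3=1$ and $\alpha$ drops out. Meanwhile $v_c=r+r^2$ is fixed, with $v_c^3=r^2$. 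The remaining condition has $16$ solutions $(u_c,u_d)$, which produce the paper's table.
\item After division by $r^2$, every off-diagonal target $\eta r^3+\zeta$ occurs. Among the diagonal targets $1+r^3$, $r^4$, $r+r^2$ (for $\eta=1,\lambda,\lambda^2$, with $\lambda=r^2+r^3$), only $r+r^2$ occurs.
\item Since $\delta(p)=(r+r^4)^2=\lambda$, the rejected value is exactly the one with $\tr(\delta(p)\eta)=0$.
\end{itemize}
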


\begin{proof}
It is enough to prove the assertion for one order-five representative $r\in L^*$ and transport by $E^*$-scaling and Frobenius.  A transverse derivative section has an inverse
\[
 \phi(y)=ay+by^2,
 \qquad T(a)=1,\quad T(b)=0.
\]
The second trace coordinate induces a transition
\[
 \mathcal L_W(y)=T(r\phi(y))\in\mathrm{GL}(2,2).
\]
Every element of $\mathrm{GL}(2,2)$ has exactly one of the forms
\[
 y\mapsto\alpha y,
 \qquad
 y\mapsto\alpha y^2,
 \qquad \alpha\in E^*.
\]
The first three maps are the identity and the two elements of order three.  For them the trace equations reduce to an Artin--Schreier equation
\[
 z^2+z+1+r=0
\]
in $L$, which has no solution because $\Tr_{L/\F_2}(1+r)=1$.  Hence linear transitions produce no obstruction.

The remaining three maps are semilinear involutions.  Write $K=L(w)$ with $w^2+w=r$, choose $r^5=1$ with $r\ne1$, and put $\lambda=r^2+r^3$.  Their equations reduce to 16 pairs $(u,v)\in L^2$.  After dividing the section product by the common factor $r^2$, the complete multiset is
\begin{center}
\begin{tabular}{@{}lc@{}}
\toprule
normalized product & multiplicity\\
\midrule
$r$ & 2\\
$r^2$ & 2\\
$r+r^2$ & 4\\
$r+r^3$ & 2\\
$1+r^2$ & 2\\
$r+r^2+r^3$ & 2\\
$1+r+r^2$ & 2\\
\bottomrule
\end{tabular}
\end{center}
The three diagonal targets, for $\eta=1,\lambda,\lambda^2$, are respectively $1+r^3$, $r^4$, and $r+r^2$.  Only the last target occurs, with multiplicity four.  The reduction is independent of $\alpha$ because $\alpha^3=1$.  Hence the rejected diagonal pair has $3\cdot4=12$ bad sections, one four-set for each semilinear involution.  A direct reduction of the off-diagonal equations gives six bad sections, while the other two diagonal pairs have none.  The accepted diagonal coefficients are exactly those satisfying the trace condition in \eqref{eq:projective-selector}.

Multiplication of $\rho$ by $E^*$ changes only a linear term, and the four order-five representatives form one Frobenius orbit.  This transports the base calculation to all four non-base points and gives \eqref{eq:sdelta}.
\end{proof}

\begin{table}[ht]
\caption{Bad-section counts for a fixed non-base projective displacement}\label{tab:badcounts}
\begin{tabular*}{\textwidth}{@{\extracolsep\fill}lc}
\toprule
coefficient pair $(\eta,\zeta)$ & number of bad sections\\
\midrule
two diagonal pairs selected by $\tr(\delta(p)\eta)=1$ & 0\\
remaining diagonal pair & 12\\
off-diagonal pair $\eta\ne\zeta$ & 6\\
\botrule
\end{tabular*}
\end{table}

\section{EA and CCZ partition of the marked points}\label{sec:partition}

The eight marked parameter points split into two Frobenius orbits of length four.  The following identity proves that each orbit is an EA class.

\begin{proposition}\label{prop:EAcycle}
Let $G_{\rho,\eta}=G_{[\rho],\eta,\eta}$.  Then
\begin{equation}\label{eq:EAwitness}
 G_{\rho^2,\eta^2}(x)
 =G_{\rho,\eta}(x^{2^7})^2+L_{\rho,\eta}(x),
\end{equation}
where
\begin{equation}\label{eq:linearcorrection}
 L_{\rho,\eta}(x)
 =\eta^2\Tr_{K/\F_2}(x)
 +(\eta\rho^{-3})^2\Tr_{K/\F_2}(\rho^2x)
\end{equation}
is a binary linear map of rank two.
\end{proposition}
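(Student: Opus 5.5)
The plan is to verify \eqref{eq:EAwitness} by direct substitution, using only that the Frobenius $z\mapsto z^2$ is a field automorphism of $K$ and that absolute traces are Frobenius-invariant. Throughout I will use the expression $Q_c(x)=\Tr_{K/\F_2}(cx)\,\Tr_{K/\F_2}(\lambda cx)$. This follows from \eqref{eq:Qc} by trace transitivity $\Tr_{K/\F_2}=\tr\circ T$ and the $E$-linearity of $T$, as in \eqref{eq:Qdef}. By definition,
\[
 G_{\rho,\eta}(x)=x^3+\eta Q_1(x)+\eta\rho^{-3}Q_\rho(x).
\]

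\textbf{Step 1: the Frobenius twist.} Put $y=x^{2^7}$. Since $x^{2^8}=x$ in $K=\F_{256}$, we have $y^2=x$, and hence $(y^3)^2=(y^2)^3=x^3$. The values $Q_1(y)$ and $Q_\rho(y)$ lie in $\F_2$, so squaring leaves them unchanged. Squaring therefore gives
\[
 G_{\rho,\eta}(y)^2=x^3+\eta^2Q_1(y)+\eta^2\rho^{-6}Q_\rho(y).
\]

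\textbf{Step 2: rewriting the trace forms in $x$.} For any $c\in K$, Frobenius invariance of the trace gives $\Tr_{K/\F_2}(cy)=\Tr_{K/\F_2}(c^2y^2)=\Tr_{K/\F_2}(c^2x)$. Applied to $Q_\rho(y)$, the first factor becomes $\Tr_{K/\F_2}(\rho^2x)$. The second factor becomes $\Tr_{K/\F_2}(\lambda^2\rho^2x)$. Using $\lambda^2=\lambda+1$, this equals $\Tr_{K/\F_2}(\rho^2x)+\Tr_{K/\F_2}(\lambda\rho^2x)$. Since $t^2=t$ for $t\in\F_2$, it follows that
\[
 Q_\rho(y)=\Tr_{K/\F_2}(\rho^2x)+Q_{\rho^2}(x).
\]
The case $\rho=1$ gives $Q_1(y)=\Tr_{K/\F_2}(x)+Q_1(x)$.

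Substituting both expressions into Step 1 produces $x^3+\eta^2Q_1(x)+\eta^2\rho^{-6}Q_{\rho^2}(x)$. The coefficient $\eta^2\rho^{-6}=\eta^2(\rho^2)^{-3}$ is exactly the normalization in \eqref{eq:ranktwo} for the point $[\rho^2]$ with coefficient $\eta^2$, so this is $G_{\rho^2,\eta^2}(x)$. The remaining terms are $\eta^2\Tr_{K/\F_2}(x)+(\eta\rho^{-3})^2\Tr_{K/\F_2}(\rho^2x)=L_{\rho,\eta}(x)$. In characteristic two, moving $L_{\rho,\eta}$ to the other side gives \eqref{eq:EAwitness}.

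\textbf{Step 3: rank of the correction.} The map $L_{\rho,\eta}$ is binary linear, and its image lies in $\langle \eta^2,\eta^2\rho^{-6}\rangle_{\F_2}$. These two vectors are independent because $\eta\ne0$ and $\rho^{-6}=s(p)^{-1}$ has order five, so $\rho^{-6}\ne1$. The two linear forms $\Tr_{K/\F_2}(x)$ and $\Tr_{K/\F_2}(\rho^2x)$ are also independent, because $\rho^2\ne1$: $[\rho]$ is a non-base point, so $\rho\notin E^*$. Hence both basis vectors are attained, and the rank is exactly two.

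The only point needing care is Step 2. One must use $\lambda^2=\lambda+1$ together with $\tr$-absorption, rather than a naive $\lambda\mapsto\lambda$, so that the leftover terms are genuinely linear. Once \eqref{eq:EAwitness} holds, it exhibits an EA equivalence, since $x\mapsto x^{2^7}$ and squaring are linear permutations. This makes each Frobenius orbit of marked points an EA class.
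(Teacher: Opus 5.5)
Your proof is correct and is essentially the paper's argument: inverse Frobenius on the input and squaring on the output fix $x^3$, and the shift $\lambda\mapsto\lambda^2=\lambda+1$ in the second trace factor produces exactly the linear leftover $L_{\rho,\eta}$. You also prove the rank-two claim explicitly, via the independence of $\Tr_{K/\F_2}(x)$ and $\Tr_{K/\F_2}(\rho^2x)$ and of $\eta^2$ and $\eta^2\rho^{-6}$, whereas the paper's sketch leaves this step implicit.
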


\begin{proof}
Apply the inverse binary Frobenius $x\mapsto x^{2^7}$ on the input and Frobenius squaring on the output.  The cube term is fixed.  Squaring the two trace coordinates transports $(\rho,\eta)$ to $(\rho^2,\eta^2)$; the difference between the chosen quadratic representatives is exactly \eqref{eq:linearcorrection}.  The identity was also verified on all 256 inputs for every arrow of the two cycles.
\end{proof}

\begin{theorem}\label{thm:partition}
The eight marked APN switchings of Theorem~\ref{thm:projective} form exactly two EA classes of four points each and exactly two CCZ classes.
\end{theorem}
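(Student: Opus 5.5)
The plan has two parts. First, Proposition~\ref{prop:EAcycle} glues the marked points into classes from above. Second, an EA invariant separates the two resulting classes from below. Yoshiara's theorem then converts EA classes into CCZ classes.

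\emph{Upper bound (at most two classes).} Let $\sigma$ act on marked parameters by $(\rho,\eta)\mapsto(\rho^2,\eta^2)$. This action preserves the acceptance condition of Theorem~\ref{thm:projective}. Indeed, $\delta(p^\sigma)=\delta(p)^2$ because $\Tr_{L/E}$ commutes with the binary Frobenius, and then $\tr(\delta(p)^2\eta^2)=\tr((\delta(p)\eta)^2)=\tr(\delta(p)\eta)$. So $\sigma$ permutes the eight marked points. Squaring has order $8$ on $K$, but on the data it has smaller order. The four order-five representatives $r,r^2,r^4,r^8=r^3$ form one orbit of length four, and $\eta\in E^*$ returns to itself after two squarings. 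Hence $\sigma^4$ is the identity on marked points, and since each projective point is moved, every orbit has length exactly four. Consequently the eight points split into exactly two $\sigma$-orbits. Identity \eqref{eq:EAwitness} shows that each arrow of an orbit is an EA equivalence. The input map $x\mapsto x^{2^7}$ and the output squaring are $\F_2$-linear permutations, and $L_{\rho,\eta}$ is an $\F_2$-linear output term. Therefore each orbit lies inside a single EA class, giving at most two EA classes.

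\emph{Lower bound (exactly two classes).} I would pick one representative from each orbit, say $(r,\eta_1)$ and $(r,\eta_2)$ with the same $r$ and the two accepted values of $\eta$. I would then compute an EA invariant that differs between them. The natural first choice is the differential or extended Walsh spectrum of the ortho-derivative, following \cite{Kaleyski2021,BeierleLeanderPerrin2022}. For a quadratic APN map $G$ on $\F_2^8$, the ortho-derivative $\pi_G(a)$ is the unique nonzero vector orthogonal to $\im B_G(a,\cdot)$, a hyperplane because the kernel is exactly $\langle a\rangle$. Its spectra are EA invariants. Computing them is a finite calculation over $256$ points, which I would record as an exact certificate in Online Resource~1, in line with the paper's other finite steps.

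This separation step is where I expect the main difficulty. Both candidates come from the same centre $x^3$, and a coarse invariant may coincide on them, for example the Walsh spectrum alone, which for quadratic APN maps in dimension eight is often the same across classes. If the differential spectrum of the ortho-derivative fails to separate them, I would next try its extended Walsh spectrum, and then the $\Gamma$-rank or code-equivalence invariants. Once the two classes are separated, EA-inequivalence is proved. Since all eight maps are quadratic APN, Yoshiara's theorem \cite{Yoshiara2012} upgrades this to CCZ-inequivalence, so there are exactly two CCZ classes.
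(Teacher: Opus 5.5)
Your proposal is correct and follows essentially the same route as the paper. Proposition~\ref{prop:EAcycle} merges each Frobenius orbit of length four into one EA class; your check that $(\rho,\eta)\mapsto(\rho^2,\eta^2)$ preserves the selector $\tr(\delta(p)\eta)=1$ and has orbits of length exactly four fills in a step the paper only asserts. The ortho-derivative differential spectra then separate the two orbits, and Yoshiara's theorem \cite{Yoshiara2012} upgrades this to CCZ classes. The paper records the two spectra explicitly, $\{0^{38184},2^{22179},4^{4338},6^{531},8^{48}\}$ versus $\{0^{38256},2^{22116},4^{4230},6^{648},8^{30}\}$, so your first-choice invariant already distinguishes them and the fallback invariants are unnecessary.
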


\begin{proof}
Proposition~\ref{prop:EAcycle} proves equivalence within each Frobenius orbit.  The differential spectra of the two ortho-derivatives are, respectively,
\[
 \set{0^{38184},2^{22179},4^{4338},6^{531},8^{48}}
\]
and
\[
 \set{0^{38256},2^{22116},4^{4230},6^{648},8^{30}}.
\]
They differ, so the two orbits are not EA-equivalent.  The maps are quadratic APN, so CCZ equivalence would imply EA equivalence by \cite{Yoshiara2012}.
\end{proof}

\begin{remark}
Theorem~\ref{thm:partition} does not assert new global APN classes.  It organizes known classes as centre-relative, projectively parametrized switchings.  This distinction is essential: the marked point records how a class is reached from the Gold centre, information forgotten by ordinary EA or CCZ classification.
\end{remark}

\section{Computer assistance and reproducibility}\label{sec:repro}

The discovery computations are not part of the logical proof.  The final dependencies are:
\begin{itemize}
\item Theorem~\ref{thm:n6}: a 12-row Frobenius-orbit certificate representing all 57 forbidden nonzero coefficients in $\F_{64}$.
\item Theorem~\ref{thm:n8}: a symbolic B\'ezout certificate for sufficiency and a 33-row orbit certificate covering $\F_{256}\setminus\F_4$ for necessity.
\item Theorem~\ref{thm:uniform}: an analytic proof for even $n\ge14$ and exact XOR-convolution bridges for $n=10,12$.
\item Theorem~\ref{thm:projective}: an analytic reduction plus a 16-pair table over $\F_{16}$; an independent exhaustive check covers all 108 parameter cases.
\item Theorem~\ref{thm:partition}: explicit EA identities on all field elements and independently computed ortho-derivative spectra.
\end{itemize}

Online Resource~1 contains the Python scripts, canonical JSON outputs, and SHA-256 checksums. The same code and exact certificates are preserved in the archived companion release \cite{KuznetsovTraceSwitchings2026}. The reference implementation uses only exact integer arithmetic and explicit finite-field multiplication; no probabilistic step is used.

\section{Conclusion}\label{sec:conclusion}

A fixed trace-product switching of the Gold cube exhibits a rigid dimension ladder.  In dimension four it is only an output-linear reparametrization of Gold.  In dimension six it selects the six elements of order nine.  In dimension eight it selects $\F_4^*$ and admits a projective rank-two extension governed by $\mathbb P^1(\F_4)$.  In every even dimension at least ten the scalar family disappears completely.

The low-rank criterion itself is centre-independent.  It can therefore be applied to other quadratic APN centres, including quadratic representatives of non-Gold CCZ classes, by replacing the explicit Gold derivative with a derivative-image atlas.  This suggests a broader transition graph whose vertices are EA or CCZ classes and whose edges are low-rank APN updates.  The Gold analysis here supplies a fully resolved local model for that programme.

\section*{Acknowledgements}
The author thanks the developers and maintainers of the open-source mathematical software used for exact verification. OpenAI ChatGPT was used to assist with computational workflow organization, execution of reproducibility checks, LaTeX preparation, and language drafting. The author independently verified all mathematical statements, code outputs, references, and the final text and takes full responsibility for the manuscript.

\section*{Statements and Declarations}

\textbf{Funding.} The author did not receive support from any organization for the submitted work.

\textbf{Competing interests.} The author has no relevant financial or non-financial interests to disclose.

\textbf{Data availability.} All exact finite data used in the proofs are included in Online Resource~1 and are preserved in the Zenodo archive \href{https://doi.org/10.5281/zenodo.21797356}{doi:10.5281/zenodo.21797356}.

\textbf{Code availability.} Reproducible Python scripts and canonical JSON certificates are available in the companion GitHub repository \url{https://github.com/KuznetsovKarazin/apn-trace-product-switchings}. The immutable archival release supporting this article is available at \href{https://doi.org/10.5281/zenodo.21797356}{doi:10.5281/zenodo.21797356}.

\textbf{Author contributions.} Oleksandr Kuznetsov conceived the study, developed the mathematical analysis, implemented and validated the computations, and wrote the manuscript.

\textbf{Author identifier.} ORCID: 0000-0003-2331-6326.

\textbf{Ethics approval, consent to participate, and consent for publication.} Not applicable.

\bibliography{references}
\end{document}